\providecommand{\tabularnewline}{\\}
\providecommand{\algorithmname}{Algorithm}
\theoremstyle{definition}
\newtheorem{defn}{\protect\definitionname}
\theoremstyle{plain}
\newtheorem{lem}{\protect\lemmaname}
\theoremstyle{plain}
\theoremstyle{definition}
 \newtheorem{example}{\protect\examplename}
\theoremstyle{plain}
\theoremstyle{plain}
\newtheorem{cor}{\protect\corollaryname}
\providecommand{\corollaryname}{Corollary}
\providecommand{\definitionname}{Definition}
\providecommand{\examplename}{Example}
\providecommand{\lemmaname}{Lemma}
\providecommand{\propositionname}{Proposition}
\providecommand{\theoremname}{Theorem}
\begin{document}
\title{Molecular Ground State Simulation by Subspace Restriction and Hund's
Rule}
\author{Tsung-Chi Chiang$^{1,2}$}
\email{lc1026@ncts.ntu.edu.tw}

\author{Calvin Ku$^{3,4}$}
\email{calvin.ku@foxconn.com}

\author{Jyh-Pin Chou$^{5,6}$}
\email{jpchou@ntu.edu.tw}

\author{Alice Hu$^{4,7}$}
\email{alicehu@cityu.edu.hk}

\author{Peng-Jen Chen$^{4}$}
\email{pjchen1015@gmail.com}

\author{Ching-Jui Lai$^{1,2,8}$}
\email{cjlai72@mail.ncku.edu.tw}

\address{$^1$Department of Mathematics, National Cheng Kung University, Tainan 70101, Taiwan}

\address{$^2$Math Division, National Center for Theoretical Sciences, National Taiwan University, Taipei 106319, Taiwan}

\address{$^3$Hon Hai Research Institute, Taipei, Taiwan}

\address{$^4$Department of Mechanical Engineering, City University of Hong Kong, Kowloon, Hong Kong SAR 999077, China}

\address{$^5$Graduate School of Advanced Technology, National Taiwan University, Taipei 106319, Taiwan}
\address{$^6$Physics Division, National Center for Theoretical Sciences, National Taiwan University, Taipei 106319, Taiwan}

\address{$^7$Department of Material Science and Engineering, City University of Hong Kong, Kowloon, Hong Kong SAR 999077, China}

\address{$^8$Corresponding Author}

\begin{abstract}
Simulation of molecular ground states on near-term quantum hardware is constrained by qubit availability and the cost of variational optimization. To address these challenges, the Subspace Restriction Scheme (SRS) is introduced as a mathematical framework that projects the molecular Hamiltonian onto a selected Fock subspace prior to qubit encoding. By enforcing molecular multiplicity and a generalized Hund's rule, the Multi-Hund Subspace (MHS) is constructed. This physically motivated restriction significantly reduces the effective Fock-space dimension, asymptotically saving $N$ qubits for a Hamiltonian of $M$ spatial orbitals and $N$ electrons. As a result, we successfully overcome classical memory bottlenecks and enable simulations of large systems, such as the $H_{22}$ chain, which requires 44 qubits under standard Jordan-Wigner (JW) encoding. While the strict pairing structure may limit accuracy in strongly correlated dissociation regimes, MHS effectively captures the essential low-energy physics of closed-shell molecules near equilibrium. In Variational Quantum Eigensolver (VQE) benchmarks, MHS enhances optimization behaviour and achieves high accuracy with a shallow ansatz. These findings demonstrate that physically motivated subspace restriction offers an effective approach to more resource-efficient quantum-chemistry simulations.
\end{abstract}
\maketitle

\section{Introduction}

The ground state energy and corresponding wavefunction of molecular systems are fundamental to numerous applications in chemistry and materials science. Specifically, ground state energy is directly related to reaction rates \cite{ReactionRates_1,ReactionRates_2}, molecular properties \cite{MolecularProperties}, and energy spectra \cite{EnergySpectra}. Additionally, the ground state wavefunction determines magnetic phenomena such as superparamagnetism \cite{superparamagnetic_behavior}. Computationally, simulating the molecular ground state involves diagonalizing the molecular Hamiltonian, where the lowest eigenvalue corresponds to the ground-state energy and the associated eigenvector represents the ground-state wavefunction.

Although ground-state simulation is classified as QMA-complete \cite{QMA_Complete,QMA_GroundState}, quantum computers offer a promising approach for simulating molecular systems that are intractable by classical diagonalization. The second-quantized formulation provides a natural many-body representation of molecular Hamiltonians, which can be mapped onto qubits using transformations such as the Jordan–Wigner (JW) and Bravyi–Kitaev \cite{BK_and_JW, Bravyi_Kitaev_Transformation}. However, on near-term quantum hardware \cite{preskill2018quantum}, the feasibility of these simulations is limited by the number of available qubits, circuit depth, and measurement overhead.

A basic physical simplification comes from particle conservation. Since the molecular Hamiltonian preserves electron number, one may restrict the full Fock space to the subspace with a fixed number of particles. For a system with $M$ orbitals and $N$ electrons in a minimal basis set, this particle-conserving subspace contains $\binom{2M}{N}$ basis states; therefore, the corresponding theoretical minimum qubit requirement is $\left\lceil \log_{2}\binom{2M}{N}\right\rceil$ \cite{2ndQFO}.

Qubit Efficiency Encoding (QEE) \cite{QEE+singlet, Randomized_Linear_Encoding} offers a general method for reducing Hamiltonian size by selecting a target subspace prior to encoding. However, the restricted Hamiltonian need not preserve the ground-state energy and wavefunction of the original Hamiltonian. These facts motivate our study of the following key question: whether further qubit reduction is possible by imposing additional physically motivated constraints, and how such constraints influence the accuracy of the resulting ground-state description.

To address these questions, this work formulates a mathematical framework for Hamiltonian reduction, termed the \emph{Subspace Restriction Scheme} (SRS). The scheme consists of four steps:
\begin{enumerate}
    \item obtain the second-quantized Hamiltonian of a molecular system;
    \item restrict this Hamiltonian to a selected Fock subspace;
    \item define an encoding from the restricted Fock subspace into qubit space and transform the restricted Hamiltonian accordingly;
    \item decompose the encoded Hamiltonian into generalized Pauli operators, or more generally into a chosen set of unitary operators.
\end{enumerate}
Within this framework, the second step primarily drives both qubit reduction and spectral deviation. While it substantially reduces the effective Fock-space dimension, it may also exclude configurations relevant to the exact ground state.

Building on this framework, two physically motivated restrictions beyond particle conservation are considered: molecular multiplicity and a generalized Hund's rule. In this formulation, the generalized Hund constraint favours configurations in which each orbital is occupied by one spin-up electron before one spin-down electron. The combination of these criteria defines four subspaces: the \emph{Multiplicity Hund Subspace} (MHS), the \emph{Hund Subspace} (HS), the \emph{Multiplicity Subspace} (MS), and the \emph{Particle Conservation Subspace} (PCS). Their inclusion relationships are illustrated in Fig.~\ref{fig:Subspace-Relation}. Among these, MHS is the most restrictive and provides the strongest compression, while PCS is the least restrictive. See Table \ref{table: example of qubit usage} for examples.  

\begin{figure}[h]
\centering
\begin{tikzpicture}[scale=1.5]  
\draw (0,0) circle (2cm); 
\node at(180:1.2cm) {(4)=PCS};
\draw (340:0.6cm) circle (1.1cm);
\node at(320:1cm) {(3)=MS}; 
\draw (70:0.9cm) circle (1cm);
\node at(80:1.2cm) {(2)=HS};
\node at(45:0.5cm) {(1)=MHS}; 
\end{tikzpicture} \caption{\emph{Subspace Relation}. \label{fig:Subspace-Relation}}
\end{figure}

The primary objective of this work is to evaluate the trade-off between quantum resource reduction and physical accuracy resulting from these restrictions. The results demonstrate that MHS achieves substantial qubit savings across a broad molecular test set and performs particularly well near equilibrium, where the restricted subspaces closely reproduce the reference energies with high fidelity. However, numerical benchmarks reveal a clear boundary of validity: in stretched-bond and open-shell regimes, the restricted pairing structure is insufficient to capture strong static correlation. Despite this limitation, the reduced subspaces offer significant practical advantages for variational quantum simulation by enhancing optimization stability and accuracy under shallow ansatz and moderate optimization budgets.

\begin{table}[H]
\begin{centering}
\begin{tabular*}{10cm}{@{\extracolsep{\fill}}cccccc}
\toprule 
\multirow{1}{*}{{\footnotesize{}Molecule}} & {\footnotesize{}MHS} & {\footnotesize{}HS} & {\footnotesize{}MS} & {\footnotesize{}PCS} & {\footnotesize{}JW}\tabularnewline
\midrule
{\footnotesize{}$\mathrm{HeH^{+}}$} & {\footnotesize{}1} & {\footnotesize{}1} & {\footnotesize{}2} & {\footnotesize{}2} & {\footnotesize{}4}\tabularnewline
{\footnotesize{}$\mathrm{HF}$} & {\footnotesize{}3} & {\footnotesize{}5} & {\footnotesize{}6} & {\footnotesize{}7} & {\footnotesize{}12}\tabularnewline
{\footnotesize{}$\mathrm{H_{2}O}$} & {\footnotesize{}5} & {\footnotesize{}8} & {\footnotesize{}9} & {\footnotesize{}10} & {\footnotesize{}14}\tabularnewline
{\footnotesize{}$\mathrm{NH_{3}}$} & {\footnotesize{}6} & {\footnotesize{}10} & {\footnotesize{}12} & {\footnotesize{}13} & {\footnotesize{}16}\tabularnewline
{\footnotesize{}$\mathrm{CH}_{4}$} & {\footnotesize{}7} & {\footnotesize{}12} & {\footnotesize{}14} & {\footnotesize{}16} & {\footnotesize{}18}\tabularnewline
{\footnotesize{}$\mathrm{O_{2}}$} & {\footnotesize{}9} & {\footnotesize{}10} & {\footnotesize{}11} & {\footnotesize{}13} & {\footnotesize{}20}\tabularnewline
{\footnotesize{}$\mathrm{H_{2}O_{2}}$} & {\footnotesize{}8} & {\footnotesize{}13} & {\footnotesize{}16} & {\footnotesize{}18} & {\footnotesize{}24}\tabularnewline
\bottomrule
\end{tabular*}
\par\end{centering}
\centering{}\caption{\justifying\emph{\small{}Qubit Requirement of Selected Molecules.}{\small{}
The qubit usage is calculated by taking the base $2$ logarithm of
the number of bases in a minimal basis set.}\label{table: example of qubit usage}}
\end{table}

The remainder of this paper is organized as follows. Section 2 introduces the mathematical framework of restriction and encoding. Section 3 presents the generalized Hund-rule construction and the resulting subspace-efficiency estimation. Section 4 presents the numerical results, including qubit-reduction analysis, fidelity and potential energy surface benchmarks, and Variational Quantum Eigensolver (VQE) performance. We conclude in Section 5 with a discussion of the advantages and physical limitations of the proposed restricted-subspace framework.

\section{Mathematical Framework}

To reduce quantum resource requirements in molecular simulations, we restrict the molecular Hamiltonian to a physically motivated subspace before mapping it to qubits.

\subsection{Restriction and Extension}

Restricting the Hamiltonian to a target subspace reduces its complexity, but typically changes the original eigenvalues.

\begin{defn}
\label{Def: restriction-Reduction} Fix $\mathcal{F}$ a Hilbert space with an orthogonal basis ${|\beta_{i}\rangle}{_i\in\Omega}$, and the target subspace $\mathcal{F}^{'}:=\text{span}({|\beta_{i}\rangle}{_i\in\Omega^{'}})$ with $\Omega^{'}\subset\Omega$. For a Hamiltonian $H\in\mathcal{L}(\mathcal{F})$ and the projector $P_{\mathcal{F}^{'}}:=\sum_{j\in\Omega^{'}}|\beta_{j}\rangle\langle\beta_{j}|$, the restricted Hamiltonian on $\mathcal{F}^{'}$ is the Hermitian operator $H|_{\mathcal{F}^{'}}:=(P_{\mathcal{F}^{'}})^{\dagger}HP_{\mathcal{F}^{'}}\in\mathcal{L}(\mathcal{F'})$. 
\end{defn}

If the dimension of a restricted Hamiltonian does not match a power of two, one can extend the Hamiltonian to a larger space while preserving the relevant eigenvalues.

\begin{defn}
\label{Def: Extension} Let $\mathcal{F}^{'}$ be a $k$-dimensional subspace of a $K$-dimensional Hilbert space $\mathcal{F}$. Let $V$ be a Hilbert space such that $\dim V+\dim\mathcal{F}^{'}=2^{n}$ for some $n\in\mathbb{N}$. Given $H\in\mathcal{L}(\mathcal{F})$, an extension of $H|_{\mathcal{F}^{'}}$ is a map $H|_{\mathcal{F}^{'}}\oplus H^{'}\in\mathcal{L}(\mathcal{F}^{'}\oplus V)$, where $H^{'}$ is Hermitian on $V$. If $\dim V=2^{\lceil\log_{2}\mathcal{F}^{'}\rceil}-\dim\mathcal{F}^{'}$, then the extension is called a minimal extension of $H|_{\mathcal{F}^{'}}$. 
\end{defn}

Setting $H^{'}$ to a zero matrix strictly preserves the spectrum after extension. Meanwhile, any linear encoding $\mathcal{E}$ with $ \mathcal{E^\dagger}\mathcal{E}=I$ is a linear change of coordinates and hence preserves the spectrum. Additionally, every Hamiltonian has a unique representation in the Pauli basis. Detailed proofs are provided in Appendix \ref{appendix:Proofs}.

\begin{restatable}{thm}{MyMainThm}
\label{Thm: encoding}With the same setup as above. Let $\mathcal{Q}$
be a $2^{n}$-dimensional Hilbert space with $\dim V+\dim\mathcal{F}^{'}=\dim\mathcal{Q}$
and $\mathcal{E}^{'}$ be a fixed encoding map from $\mathcal{F}^{'}\oplus V$
onto $\mathcal{Q}$. For any extension $H|_{\mathcal{F}^{'}}\oplus H^{'}\in\mathcal{L}(\mathcal{F}^{'}\oplus V)$,
the matrix $(H|_{\mathcal{F}^{'}}\oplus H^{'})^{\mathcal{\text{\ensuremath{\mathcal{E}^{'}}}}}:=\mathcal{E}^{'}\circ(H|_{\mathcal{F}^{'}}\oplus H^{'})\circ(\mathcal{E}^{'})^{\dagger}$
is Hermitian and preserves the spectrum of $H|_{\mathcal{F}^{'}}$.


\end{restatable}

In the following numerical tests, all calculations assume $H^{'}$ to be the zero matrix, with classical eigensolvers used by default. For VQE simulations, all Hamiltonians are decomposed into Pauli matrices.

\subsection{Generalized Hund's Rule and Dimension Estimation}

The original Hund's rule states that every orbital in a subshell is singly occupied before any orbital is doubly occupied, and all electrons in singly occupied orbitals share the same spin. Because an explicit atomic subshell classification is not generally available or convenient in molecular-orbital simulations, we introduce the following generalized filtering rule:

\begin{quote}[Generalized Hund's Rule]
\textit{Each orbital is filled with one spin-up electron before one spin-down electron}.    
\end{quote}

By confirming that a state contains no lone spin-down electrons, we effectively filter the Fock space. The precise number of bases retained by this rule is given as follows:

\begin{restatable}{prop}{SizeEstimation}
\label{prop: SizeEstimation} For a molecule with $M$ orbitals and $N$ electrons, the number of fermionic bases in a minimal basis set satisfying the generalized Hund's rule is $\sum_{k=0}^{\lfloor\frac{N}{2}\rfloor}\binom{M}{N-k}\binom{N-k}{k}.$
\end{restatable}

\begin{cor}
\label{Multi+hnud} Given a target molecular multiplicity of $2S+1$, where $S$ is the total spin angular momentum, the number of fermionic bases satisfying both the generalized Hund's rule and the multiplicity constraint is $\binom{M}{\frac{N+2S}{2}}\binom{\frac{N+2S}{2}}{\frac{N-2S}{2}}$.
\end{cor}

\begin{figure*}[h]
\subfloat[]{{\small{}}{\small\par}
{\small{}\includegraphics[viewport=21.60938bp 0bp 396.172bp 312.3611bp,clip,height=5cm]{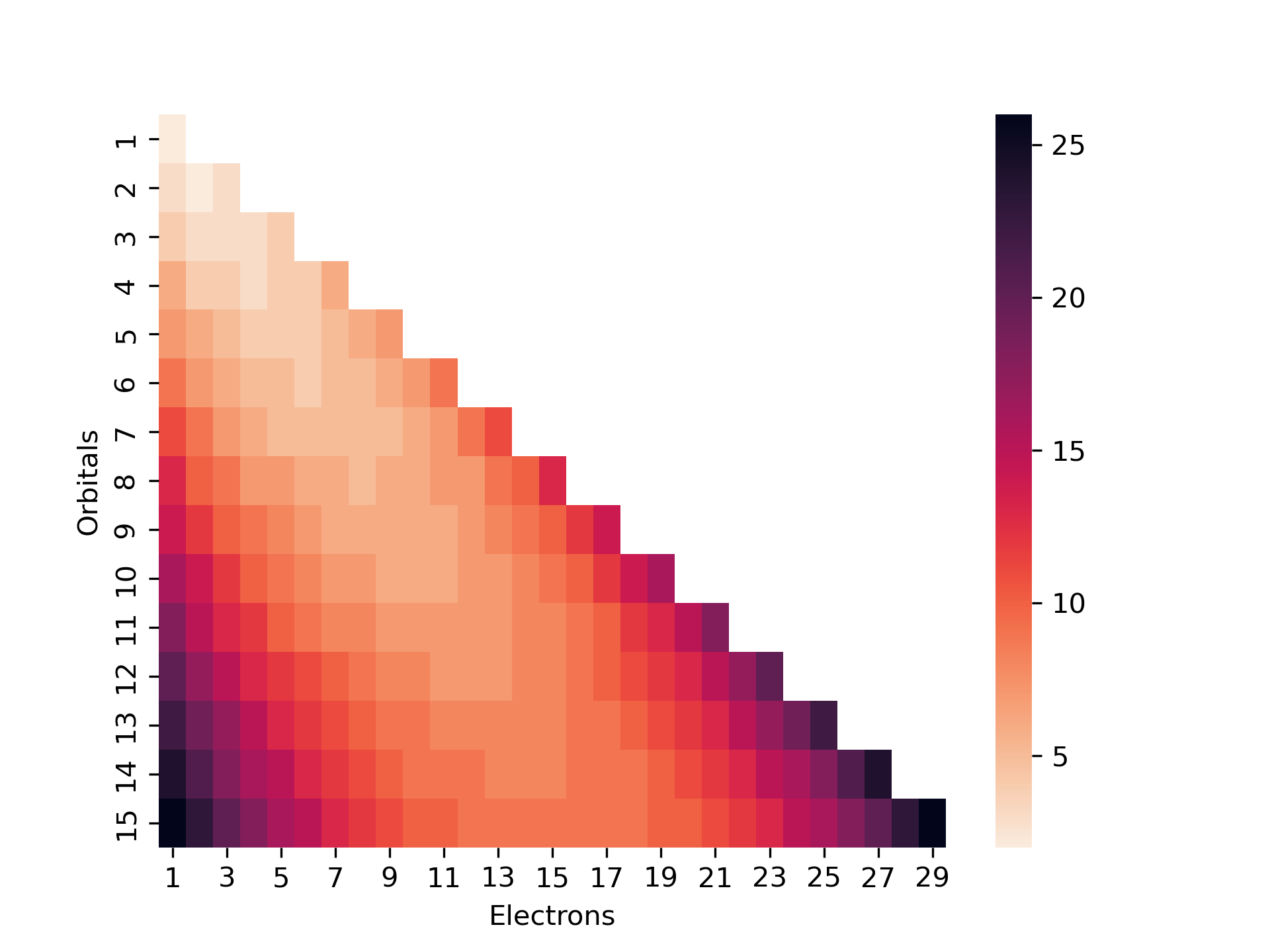}}{\small\par}}{\small{}}\subfloat[]{{\small{}}{\small\par}
{\small{}\includegraphics[viewport=21.60938bp 0bp 396.172bp 312.3611bp,clip,height=5cm]{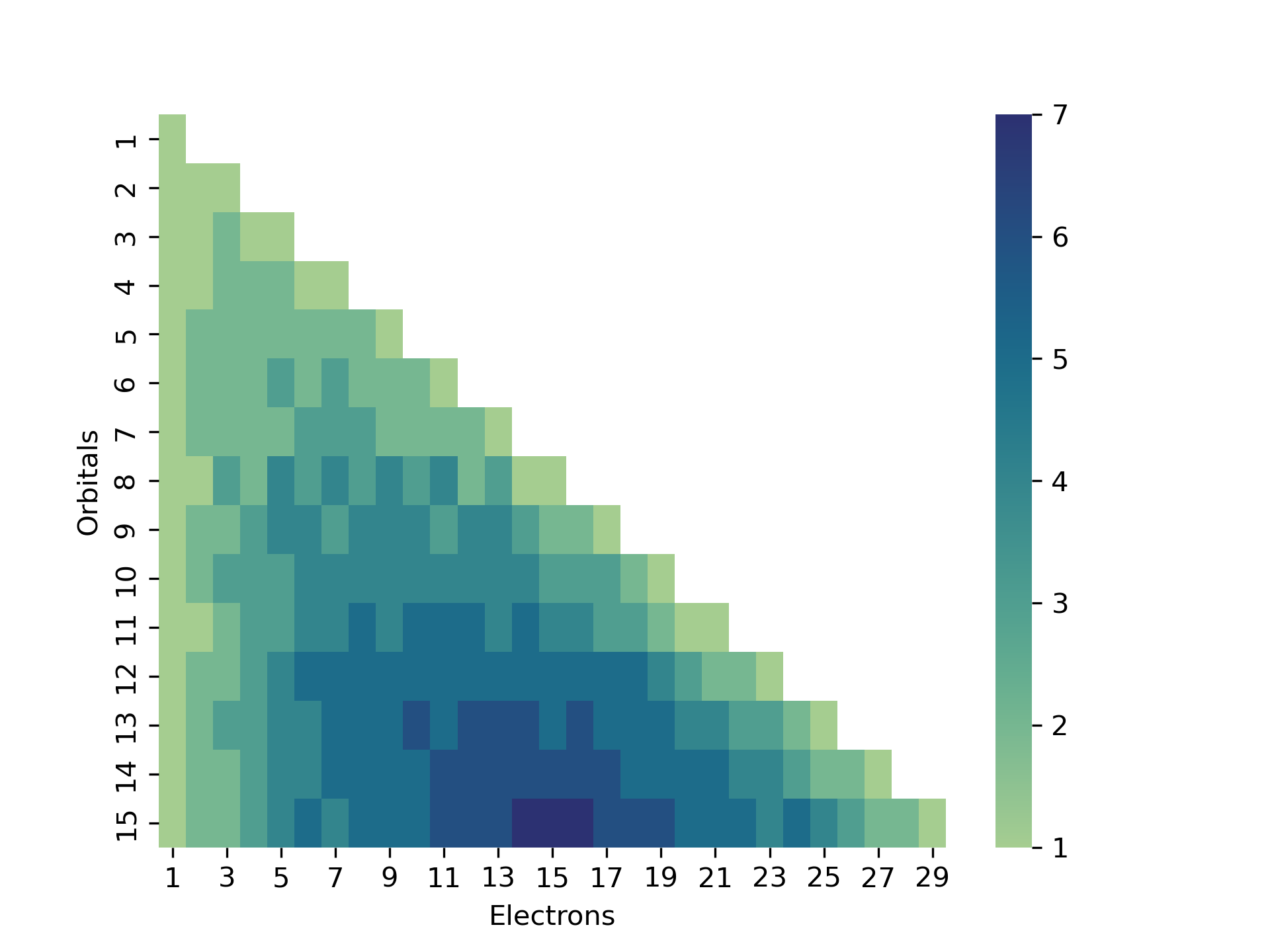}}{\small\par}}{\small{}}

\subfloat[]
{{\small{}\includegraphics[viewport=21.60938bp 0bp 345.75bp 312.3611bp,clip,height=5cm]
{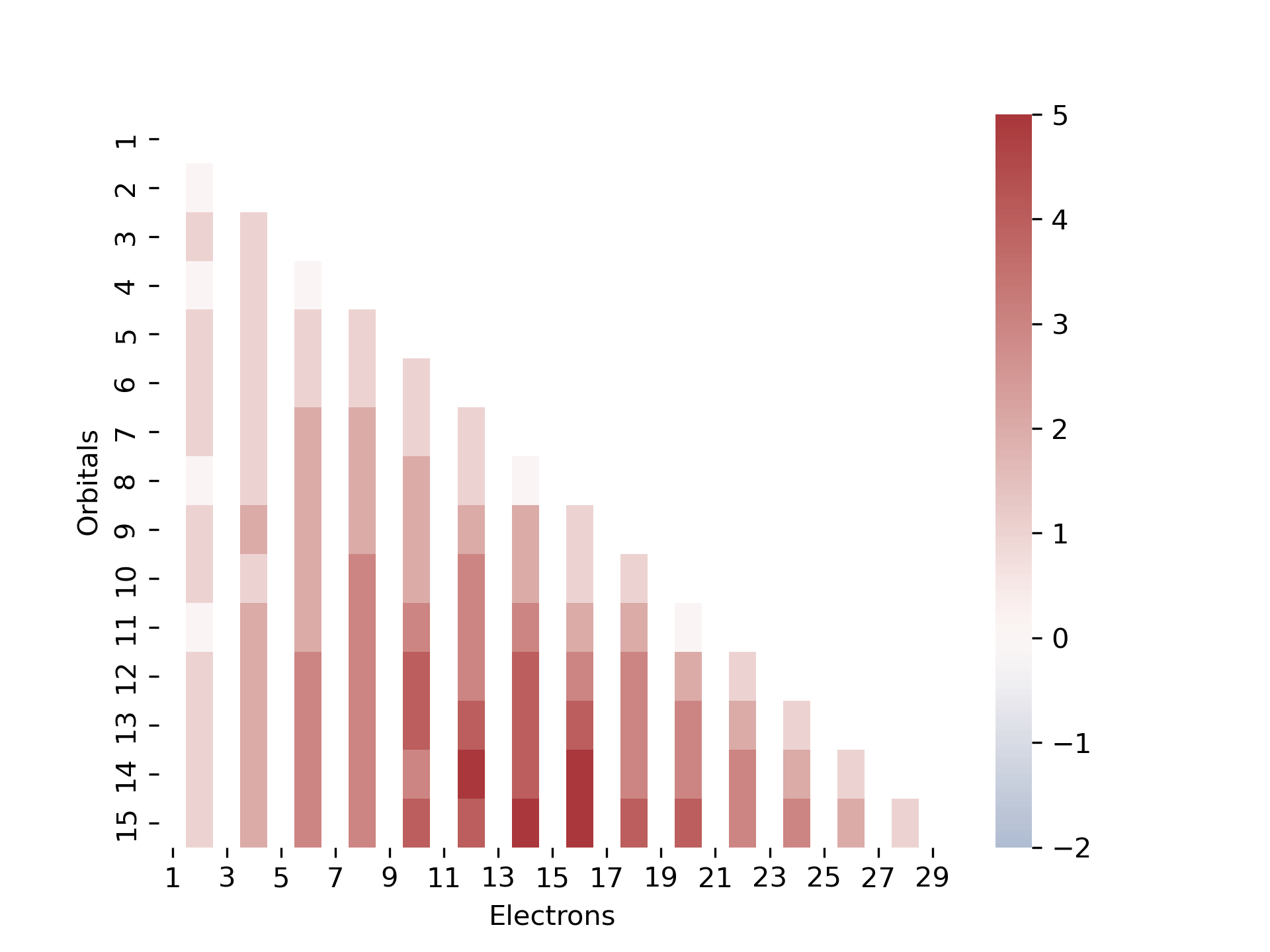}}}{\small{}}\subfloat[]{{\small{}\includegraphics[viewport=21.60938bp 0bp 345.75bp 312.3611bp,clip,height=5cm]{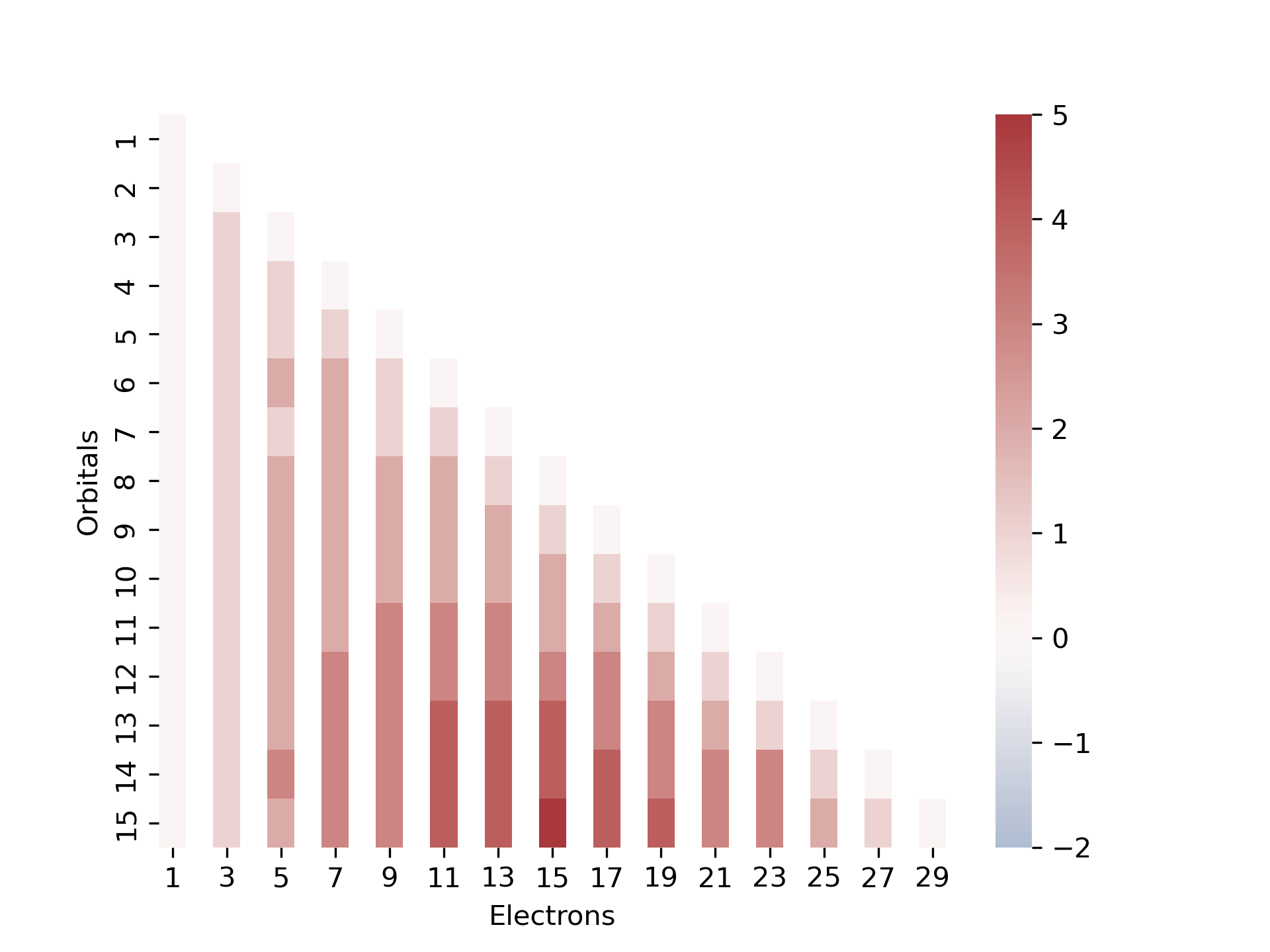}}{\small\par}
{\small{}}{\small\par}}{\small{}}\subfloat[]{{\small{}\includegraphics[viewport=21.60938bp 0bp 396.172bp 312.3611bp,clip,height=5cm]{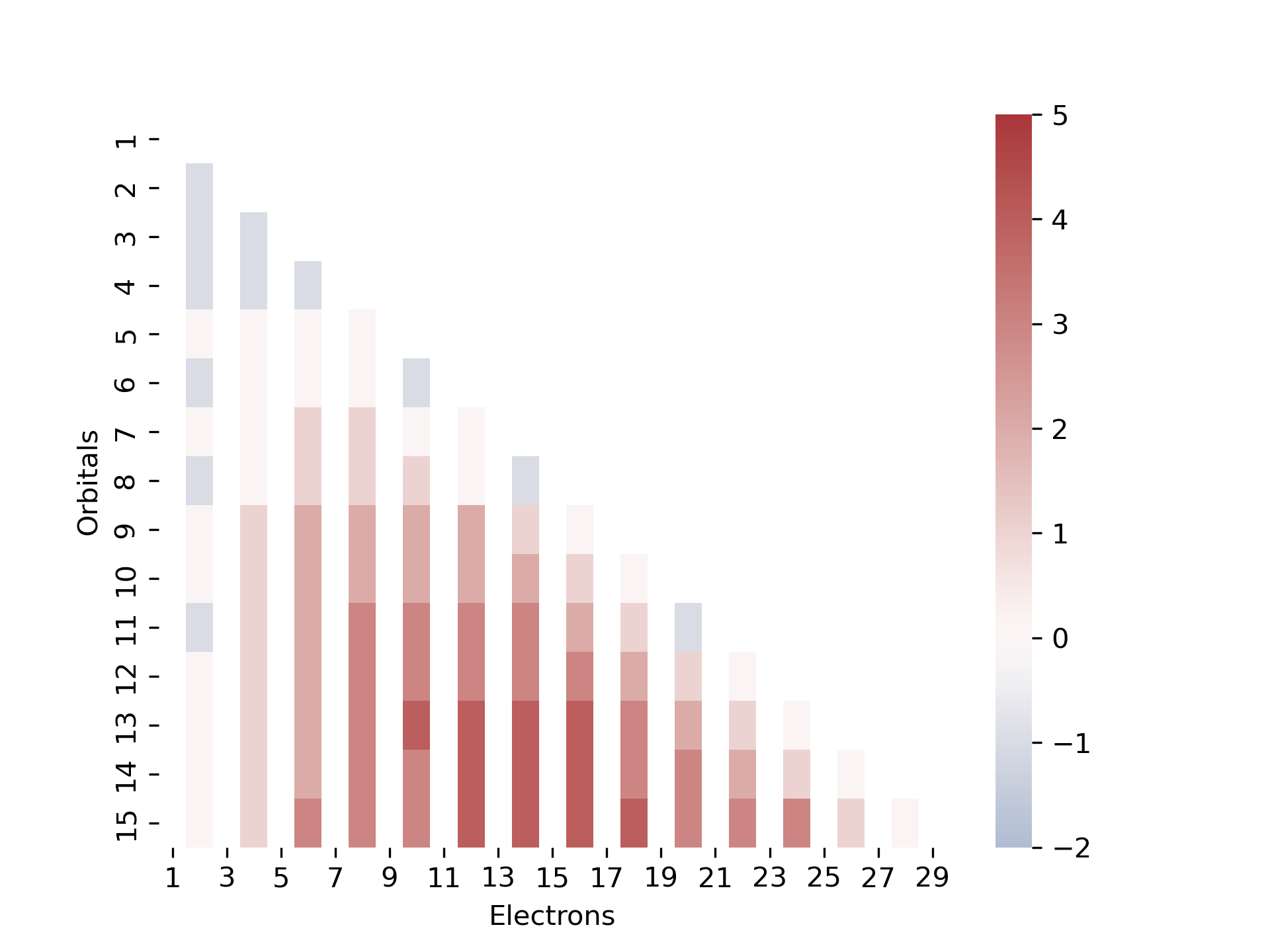}}{\small\par}

{\small{}}{\small\par}}{\small\par}

\noindent {\small{}\caption{\justifying\emph{\small{}Qubit Requirement Difference.}{\small{}
Darker colours represent more significant differences in qubit requirement.
(A) Comparing the PCS with the Hund subspace, the difference
becomes greater as $M$ goes larger and $N$ goes near $M$. (B) The
trend is different when comparing the JW with the Hund
subspace. The difference becomes greater as $N$ closes to $1$ or
$2M$. The reason is as follows: the qubit requirement of the JW
transformation only depends on the number of orbitals; meanwhile,
the Hund subspace requires more qubits as $N$ approaches $M$. (C)--(E) Compare the utilizing molecular multiplicity of singlet, doublet,
and triplet with the Hund subspace. }\label{fig: Qubit Usage Difference}}
}{\small\par}
\end{figure*}

In the molecular ground state, all unpaired electrons usually have parallel spin, and so the number of unpaired electrons plus one is also equal to molecular multiplicity \cite{miessler1999inorganic}. Thus, for convenience, we let $2S = N_\alpha - N_\beta$ where  $N_\alpha$ and $N_\beta$ represent numbers of $\alpha$ and $\beta$ spin electrons, respectively.

To further clarify the computational efficiency gained by the restriction, we compute the ratio of basis sizes between PCS and HS. 

\begin{restatable}{prop}{Estimation}
\label{Prop: Estimation of basis ratio} For a molecular system with $M$ orbitals and $N$ electrons, if $N$ is fixed, the basis ratio $\gamma:=\frac{\mathrm{PCS \ size}}{\mathrm{HS \ size}}$ converges to $2^{N}$ as $M\rightarrow\infty$. 
\end{restatable}

Crucially, Proposition \ref{Prop: Estimation of basis ratio} demonstrates that the difference in qubit requirements between the PCS and the HS approaches $N$ qubits when $M\gg N$. As we will present in the following section on the algorithmic complexity, this $2^N$-dimensional reduction is a key factor in improving the compatibility of molecular simulation in the NISQ era. The proofs for Proposition \ref{prop: SizeEstimation},  \ref{Prop: Estimation of basis ratio}, and Corollary \ref{Multi+hnud} are detailed in Appendix \ref{appendix:Proofs}.

\section{Implementation and Computational Efficiency}

The practical utility of the Multi-Hund Subspace (MHS) and Hund Subspace (HS) frameworks depends on whether the classical preprocessing overhead remains manageable relative to the quantum resource savings. In the NISQ regime, where qubit counts and gate depths are the primary constraints, the principal advantage of this strategy is the one-time classical initialization, which circumvents both classical memory limitations and quantum resource barriers.

The proposed simulation workflow incorporates a classical filtering process that precedes quantum circuit preparation:
\begin{enumerate}
\item \textbf{Mean-Field Initialization:} An initial Restricted Hartree-Fock (RHF) or Unrestricted Hartree-Fock (UHF) calculation is performed to obtain molecular orbitals, a standard process with computational scaling of $O(M^{3})$ to $O(M^{4})$ \cite{szabo2012modern, helgaker2014molecular}.
\item \textbf{Subspace Filtering:} The Generalized Hund’s Rule and molecular multiplicity constraints are applied to identify the relevant $D$-dimensional target subspace.
\item \textbf{Hamiltonian restriction:} An optimized mapping algorithm is used to project the Hamiltonian into the target basis. Then, transform the restricted Hamiltonian into a qubit-compatible format.
\end{enumerate}

\subsection{Restriction Complexity and Resource Estimation}

To ensure the viability of the method, it is necessary to evaluate the classical time required to construct the restricted matrix $H|_{\mathcal{F}'}$ and the resulting quantum measurement overhead.

\begin{figure*}[h]

\begin{tikzcd}[row sep=1.8cm, column sep=1cm]
\begin{array}{c} \text{Sorted Subspace Basis} \\ \{ |j\rangle \}_{j \in \Omega'} \end{array} &&& 
\begin{array}{c} \text{Restricted Excitation} \\ E|_{\{ |j\rangle \}_{j \in \Omega'}} = \sum_{j \in \Gamma} c_j |\sigma(j)\rangle \langle j| \end{array} \\
\\
\begin{array}{c} \text{Permuted Basis} \\ \{ |\sigma(j)\rangle \}_{j \in \Omega'} \end{array} &&& 
\begin{array}{c} \text{Remaining Basis} \\ \{ |\sigma(j)\rangle \}_{j \in \Gamma} \\ \Gamma := \{j \in \Omega' \mid \sigma(j) \in \Omega'\} \end{array}
\arrow["\begin{array}{c} \text{k-body Excitation,}\ E \\ \text{acts on basis} \\ E|j\rangle \rightarrow c_j\cdot|\sigma(j)\rangle \end{array}"', from=1-1, to=3-1]
\arrow[from=3-1, to=3-4, "{\begin{array}{c} \text{Check if} \\ |\sigma(j)\rangle \in \{ |j\rangle \}_{j \in \Omega'} \end{array}}"]
\arrow[from=3-4, to=1-4]
\end{tikzcd}
\centering{\caption{\justifying\emph{\small{}Workflow of Restriction.}{}}}
\end{figure*}

A standard molecular Hamiltonian contains $O(M^{4})$ terms. Evaluating the action of a creation or annihilation operator on a single computational basis state requires $O(1)$ time using bitwise operations. After generating the new state, verifying its presence in the target subspace and identifying its index requires $O(\log D)$ time via binary search. By exploiting the Hermiticity of the Hamiltonian, only the upper triangular portion is computed, reducing the workload by half. The total classical time complexity is therefore bounded by $O(M^{4} \cdot D \log D)$.

The measurement overhead for the VQE is determined by the sparsity of the restricted matrix. Since the Hamiltonian consists of $O(M^{4})$ fermionic operator terms, any state $|i\rangle$ can transition to at most $O(M^{4})$ other valid states. Consequently, the number of non-zero entries $N_{nz}$ in the upper triangle is strictly bounded by $O(D \cdot M^{4})$. As the mapping to generalized Pauli strings scales linearly with $N_{nz}$, the total Pauli term count also scales as $O(D \cdot M^{4})$.

The choice of baseline is critical when evaluating the efficiency of this framework:
\begin{itemize}
\item \textbf{Versus Standard JW Encoding:} Compared to standard JW encoding on the full Fock space, the SRS method achieves a significant reduction in qubit requirements. This reduction necessitates a trade-off in mapping the projected matrix, which increases the total number of Pauli terms.
\item \textbf{Versus Subspace Restriction Strategies (SRS):} Compared to other restricted baselines (e.g., PCS or MS), this method is superior. Since $D$ is strictly minimized by the Hund constraint, it mathematically guarantees both fewer qubits and a proportional reduction in Pauli terms.
\end{itemize}

The necessity and long-term scalability advantage of this strategy are demonstrated by simulating a linear hydrogen chain, $H_{22}$, using the STO-3G basis set (which comprises 44 spin-orbitals and 22 electrons). For a system of this size, the configuration space, even under basic particle conservation, requires approximately 8 TB of RAM to store variables. This substantial memory requirement is computationally infeasible on standard classical computers.

The implementation of the Multi-Hund Subspace (MHS) restriction provides a significant reduction in the dimensionality of the Hamiltonian (Table \ref{tab:systemsize}).  This reduction enhances the feasibility and effectiveness of numerical tests, enabling analysis of complex systems. For instance, the $H_{22}$ ground state, -11.48939 Hartree, can now be computed classically in 30 minutes on an AWS g4dn.8xlarge instance (32 threads, 128 GB RAM, 16 GB VRAM) \cite{braket}. See Table \ref{tab:H22}.

Although SRS methods extend the applicability of classical devices, classical simulation will ultimately encounter a memory limitation as system size increases. The primary paradigm shift offered by the MHS framework is realized through its application to quantum hardware. Standard JW encoding for $H_{22}$ requires 44 qubits, which approaches the operational limits of current NISQ hardware. By restricting the space prior to encoding, MHS reduces the required qubits to a physically realizable range. Since quantum processors inherently track the state space without requiring exponential classical RAM, combining the MHS restriction with quantum algorithms enables computational scaling and surpasses the capabilities of restricted classical solvers.

\section{Results and Discussion}

We analyze the behaviour of the Multi-Hund Subspace (MHS) to determine its suitability as a resource-efficient molecular simulation method on quantum computers. Our discussion focuses on three key areas: stability against reference selection, the physical limitations revealed by potential energy surface (PES) scans, and the practical payoff in facilitating Variational Quantum Eigensolver (VQE) convergence. All following results are computed with the STO-3G basis set, and molecular geometries are from Computational Chemistry Comparison and Benchmark DataBase \cite{CCCBDB}. See Tables \ref{tab:pcs}, \ref{tab:MS},  \ref{tab:HS}, and \ref{tab:MHS} for the full numerical results. 

\subsection{Reference Sensitivity and the Cost-Accuracy Trade-off}

First, we examine the sensitivity of the subspace construction to the choice of the mean-field references: Restricted Hartree-Fock (RHF) and Unrestricted Hartree-Fock (UHF). Figure \ref{fig:cost_accuracy} maps the residual energy error against the subspace dimension for the molecular test set.

\begin{figure}[h]
    \centering
    \includegraphics[width=0.8\textwidth]{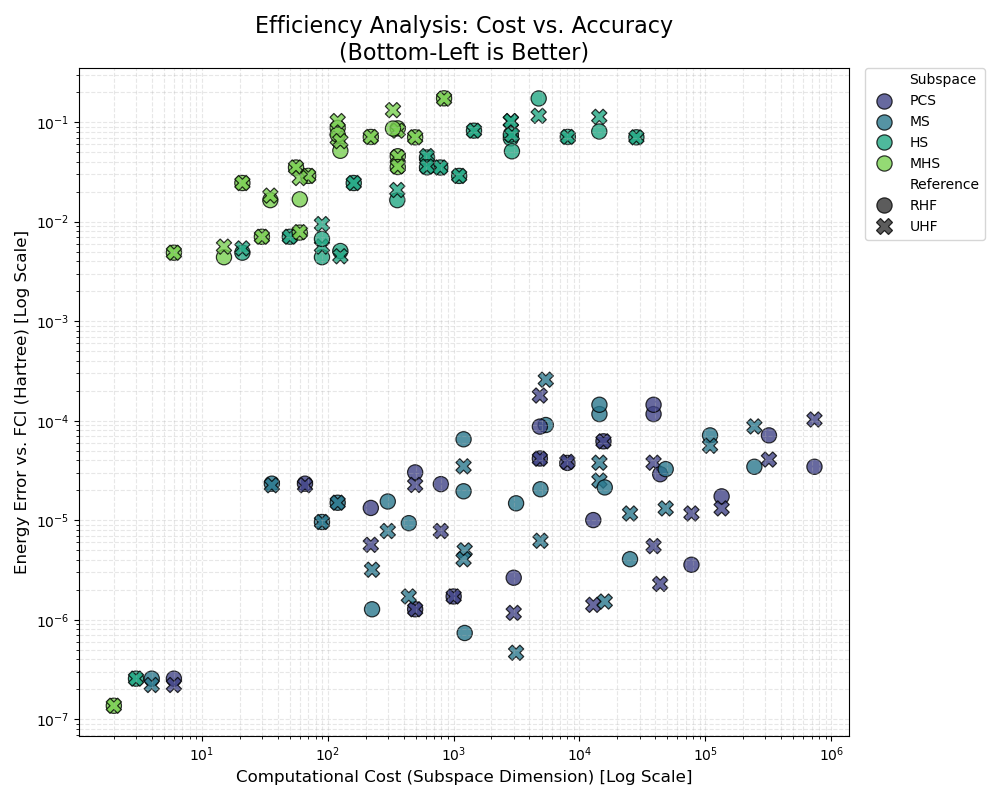}
    \caption{\justifying\emph{\small{}Residual energy error against the subspace dimension for the molecular test set.} The Residual energy is defined as the difference between the subspace and FCI results. The data in this figure can be found in Appendix \ref{appendix:Data}.}
    \label{fig:cost_accuracy}
\end{figure}

A striking feature observed in our data is the diversity in solution variance. For the Multiplicity Subspace (MS) and Particle Conservation Subspace (PCS), we observe significant vertical scatter for the same molecule depending on the reference orbitals. In sharp contrast, the MHS and Hund Subspace (HS) data points appear tightly clustered, exhibiting minimal diversity between RHF and UHF calculations. This suggests that by enforcing strict generalized Hund’s rule constraints—specifically pairing and local spin maximization—MHS and HS capture the same essential physical characters regardless of initial symmetry breaking in the reference.

However, this stability requires a deliberate trade-off. As shown in Figure \ref{fig:cost_accuracy}, MHS and HS generally do not achieve chemical accuracy, defined as an error below $1.6×10^{-3}$ Ha, except $H_2$ in this test set. This standard is often achievable with the more computationally expensive MS. This represents a strategic compromise: MHS sacrifices the ``last mile'' of dynamic correlation to achieve a massive reduction in dimensionality—approximately $100\times$ smaller than MS. In the context of NISQ devices, where qubit usage are the primary bottleneck, this trade-off is often essential to make simulation feasible.

To further illustrate these limitations, we analyzed the wavefunction overlap fidelity ($F = |\langle \Psi_\mathrm{{subspace}} | \Psi_{\mathrm{PCS}} \rangle|^2$) across both closed-shell and open-shell environments.

\begin{figure}[h]
    \centering
    \includegraphics[width=0.8\textwidth]{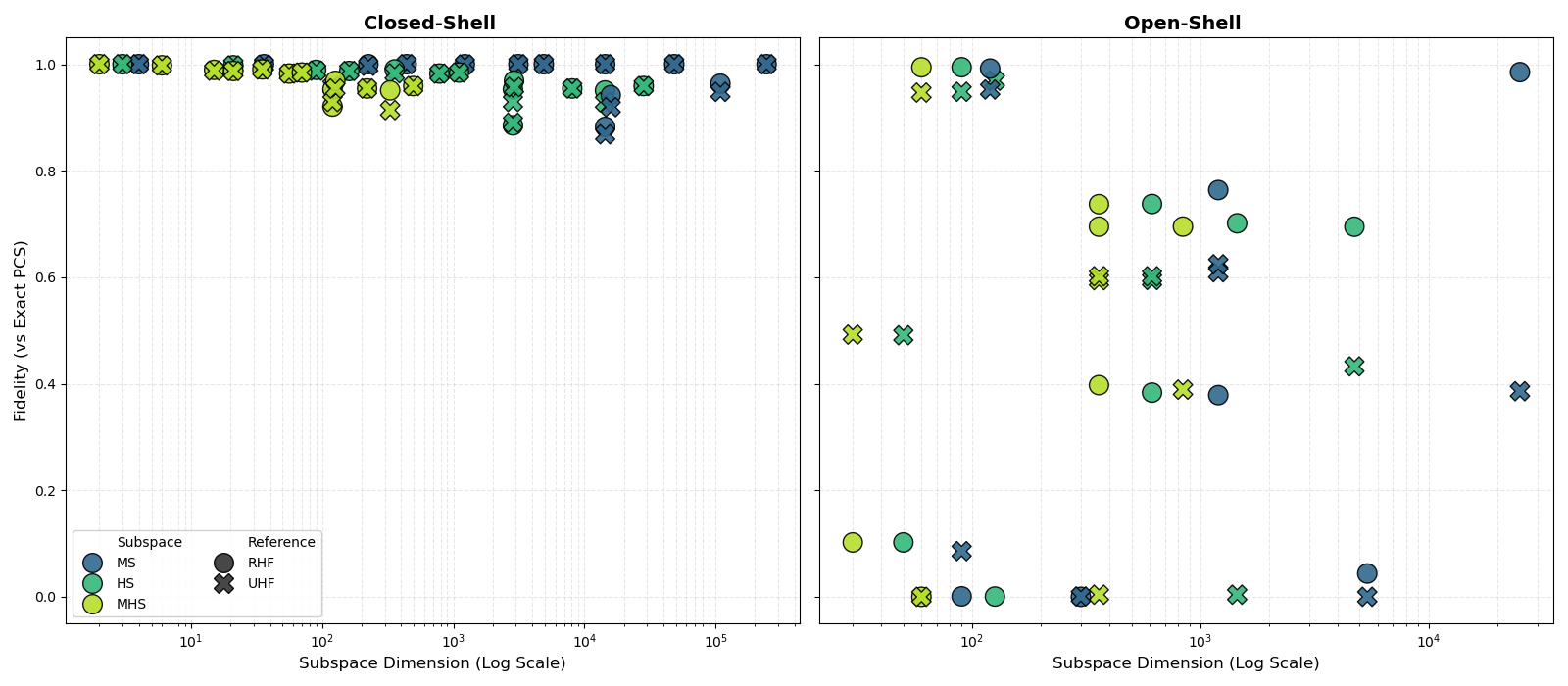}
    \caption{\textit{Wavefunction overlap fidelity for closed-shell and open-shell systems.} Both panels are evaluated by the ground state information in Figure \ref{fig:cost_accuracy}. The left panel is for closed-shell systems. In the right panel, eight open-shell radicals are $\mathrm{NO}$, $\mathrm{NO_2}$, $\mathrm{O_2}$, $\mathrm{CN}$, $\mathrm{CH}$, $\mathrm{OH}$, $\mathrm{NH}$, and $\mathrm{NF}$. }
    \label{fig:fidelity}
\end{figure}

The results in the left panel of Figure \ref{fig:fidelity} reveal that for closed-shell molecules, MHS acts as a powerful filter. It successfully captures the ground state information ($F > 0.95$) while ignoring a vast number of chemically irrelevant states that populate the larger MS. This high information density makes MHS an effective initialization for quantum algorithms. By restricting the initial state preparation to the physically relevant subspace, the VQE optimizer begins in close proximity to the true ground state. This targeted initialization may help mitigate the risk of getting trapped in high-energy local minima or encountering barren plateaus \cite{mcclean2018barren, anschuetz2022quantum} that frequently arise when optimizing blindly across unconstrained, chemically irrelevant sectors of the whole Fock space.

For open-shell radicals (Figure \ref{fig:fidelity}, Right), we observe a general decline in fidelity. Crucially, even the extensive MS fails to consistently achieve perfect overlap with the PCS ground state for these systems. This difficulty is not merely a consequence of the MHS restriction, but rather stems from the fact that the ground states of these radical systems are often physically degenerate. In the presence of ground state degeneracy, classical eigensolvers may converge to an arbitrary state within the degenerate subspace. Consequently, the direct wavefunction overlap fidelity drops, even when the subspace inherently captures the correct degenerate energy.

To characterize the physical limitations of the restricted subspaces, the potential energy surfaces (PES) of $\mathrm{N}_{2}$ and LiH are computed. As observed in the open-shell analysis, evaluating subspace fidelity in the dissociation regime is complicated by degeneracies, which can make classical eigensolver outputs sensitive to initialization. To reduce this ambiguity and visualize the real limitation of proposed subspaces, we use the Multiplicity Subspace (MS) as a baseline. We first compute the MS solution and then map this wavefunction to the MHS, HS, and PCS subspaces to construct initial guesses for the corresponding classical eigensolvers. With this initialization protocol, the PCS results are the same as MS across the bond-length scan. We therefore use the MS solution to initialize the classical eigensolvers when assessing the behaviour of the MHS and HS subspaces.

\begin{figure}[h]
    \centering
    \includegraphics[width=0.48\textwidth]{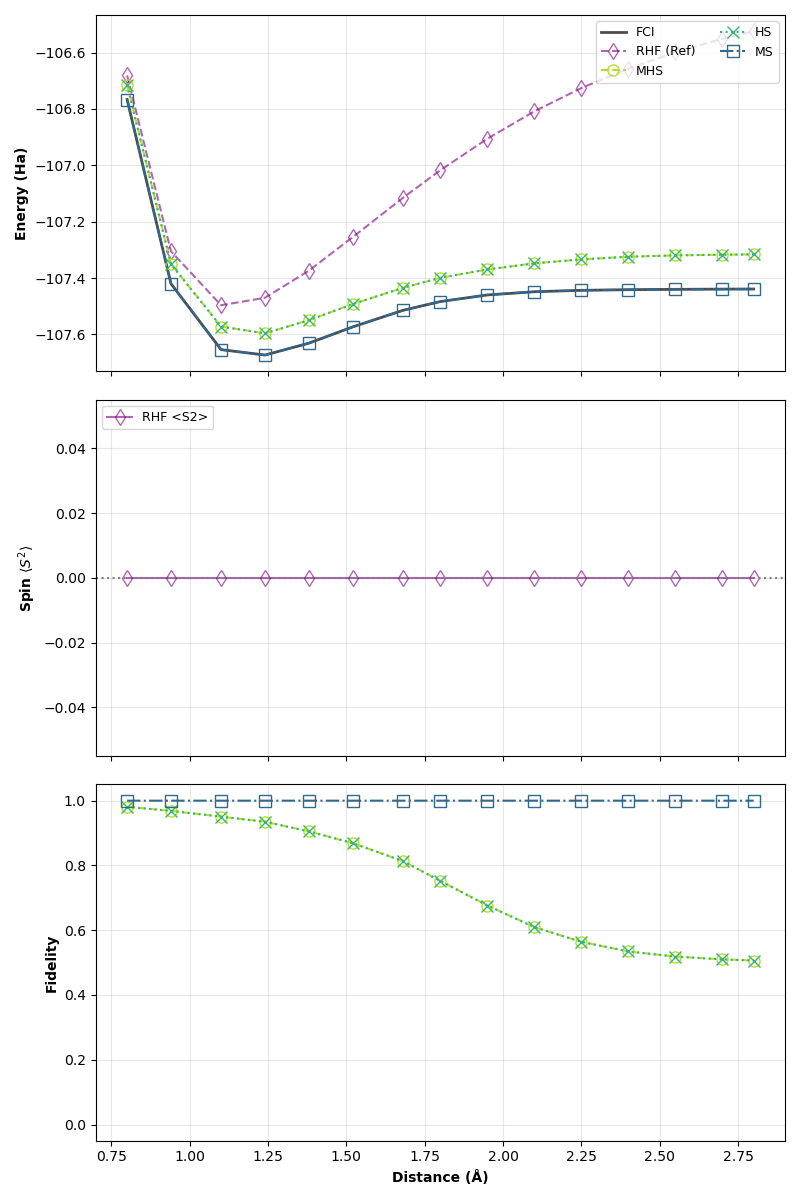}
    \includegraphics[width=0.48\textwidth]{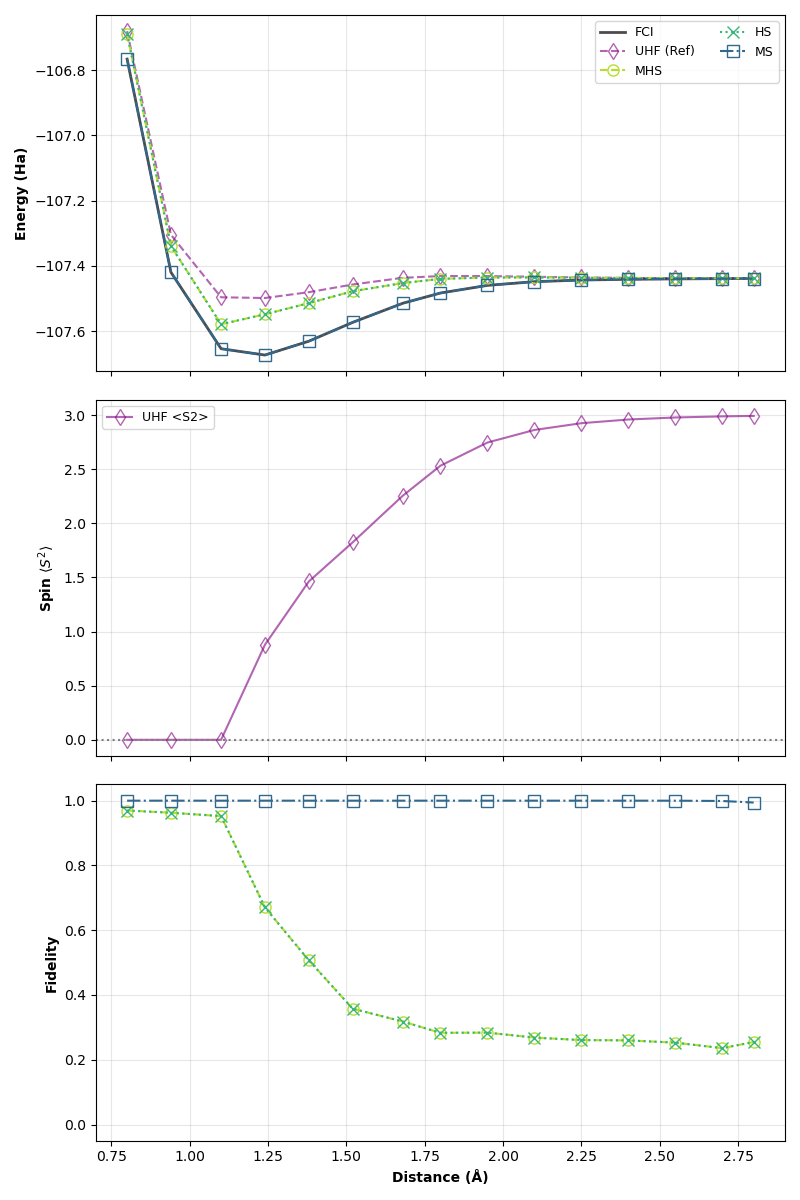}
    \caption{\textit{\small{}Potential Energy Surface of $N_{2}$.} \small{}A 3x2 grid of plots presents Energy (top row), Spin $\langle S^2 \rangle$ (middle row), and Fidelity (bottom row) as functions of Distance (Å). The left panels are based on the RHF reference, whereas the right panels utilize the UHF reference. Each plot compares results from several computational methods, including FCI, the respective reference method, HS, and MS.}
    \label{fig:n2_pes}
\end{figure}

\begin{figure}[h]
    \centering
    \includegraphics[width=0.48\textwidth]{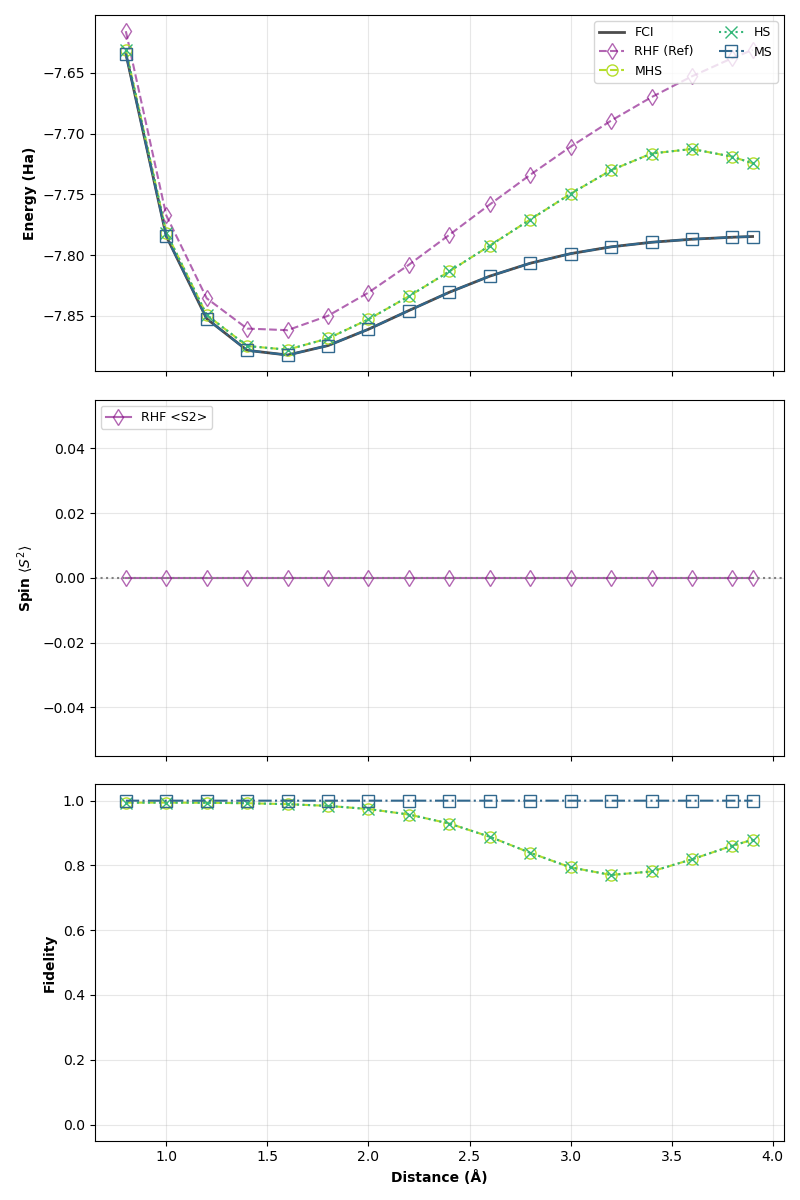}
    \includegraphics[width=0.48\textwidth]{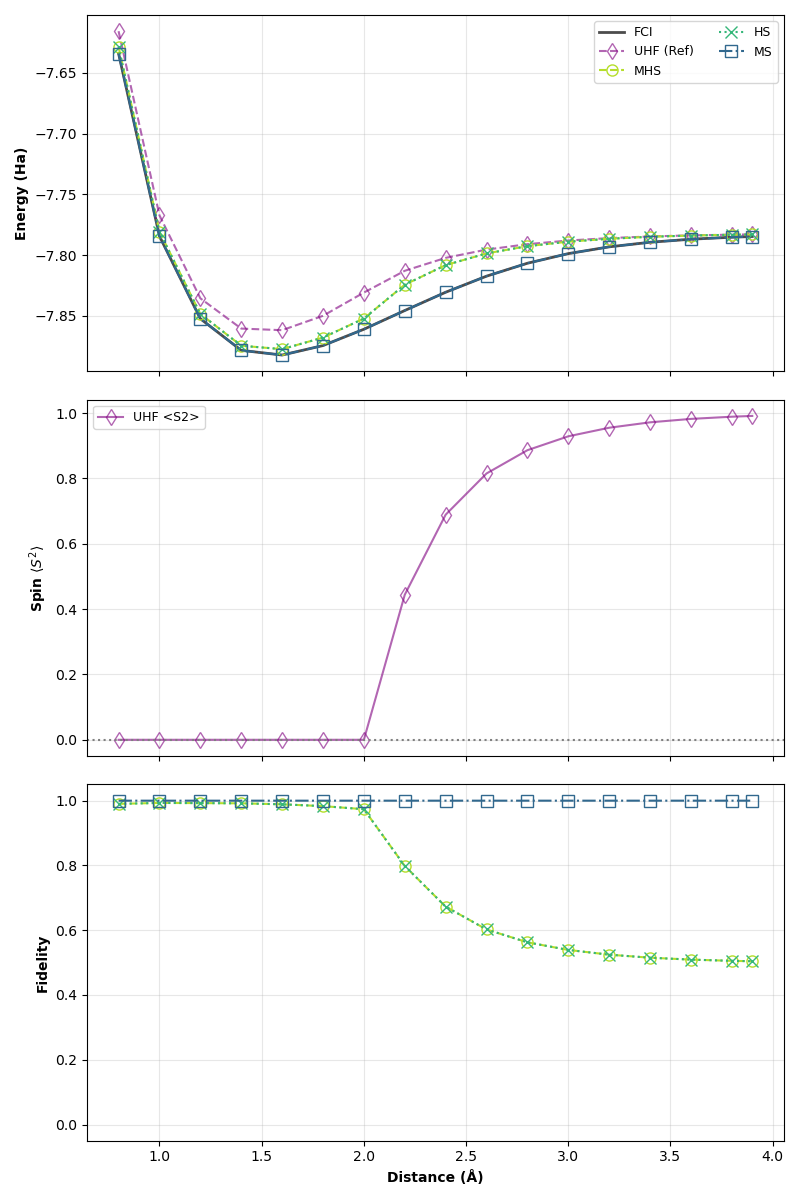}
    \caption{\textit{\small{}Potential Energy Surface of LiH.} \small{}A 3x2 grid of plots demonstrate Energy (top row), Spin $\langle S^2 \rangle$ (middle row), and Fidelity (bottom row) as functions of Distance (Å). The left panels are based on the RHF reference, whereas the right panels utilize the UHF reference. Each plot compares results from several computational methods, including FCI, the respective reference method, HS, and MS.}
    \label{fig:lih_pes}
\end{figure}

In the near-equilibrium regime ($R < 1.1$ \AA\ for $\mathrm{N}_{2}$ and $R < 2.0$ \AA\ for LiH), the MHS and HS energies closely follow the FCI curve and maintain high fidelity. In this region, before the onset of spin contamination ($\langle S^{2} \rangle = 0$), the RHF- and UHF-based results are nearly identical. This indicates that, when dynamic correlation \cite{helgaker2014molecular} dominates, the generalized Hund-type constraints capture the essential low-energy physics largely, independently of the reference choice. For NISQ applications focused on equilibrium properties, MHS therefore provides a reliable and substantial reduction of the Fock-space dimension.

As the bond is stretched, however, the main limitation of these restricted subspaces becomes apparent: they cannot fully capture static correlation \cite{helgaker2014molecular}. In the dissociation process, especially for the triple bond in $\mathrm{N}_{2}$, the exact ground state includes pronounced multi-reference character, involving superpositions of nearly degenerate electronic configurations. Because the MHS and HS constructions impose generalized Hund-type pairing constraints, they lack the flexibility required to represent the dissociation-limit wavefunction. Consequently, their energies deviate from the FCI curve, and their fidelities drop.

This limitation is particularly evident in the UHF-based scans. At the Coulson--Fischer point\cite{CF_point, Coulson01041949}, where spin contamination sets in and $\langle S^{2} \rangle$ rises sharply from zero, the MHS and HS curves exhibit an abrupt deterioration: their fidelities drop sharply, and their energies separate from the FCI reference. This behaviour reflects the transition of the UHF molecular orbitals from delocalized bonding orbitals to localized broken-symmetry orbitals. Once the restricted subspaces are constructed from these localized orbitals, the enforced spatial-pairing structure becomes incompatible with the true correlated ground state, leading to the observed energy penalty. In this sense, the Coulson--Fischer point marks the point at which the restricted-pairing picture ceases to provide a quantitatively accurate description.

Figure \ref{fig:n2_pes} and \ref{fig:lih_pes} also demonstrate that the long-range behaviour of MHS and HS depends on the orbital reference used to define the subspace. Near equilibrium, RHF- and UHF-based constructions give essentially the same results, while after symmetry breaking, UHF-based restricted subspaces deteriorate much more rapidly. Thus, while the low-energy physics is captured similarly in the weakly correlated regime, the stretched-bond regime indicates a pronounced reference dependence in the restricted-subspace method.

Finally, the fidelity curves do not decay to zero in the dissociation regime, but instead approach nonzero plateaus. In the present scans, these asymptotic values depend on both the molecule and the orbital reference. For example, the UHF-based scans approach approximately $0.5$ for LiH and $0.25$ for $\mathrm{N}_{2}$, whereas the RHF-based scans remain noticeably higher. This behaviour is consistent with the interpretation that the restricted subspaces retain only a subset of the degenerate configurations present in the exact dissociation limit. By excluding certain uncoupled spin configurations, MHS and HS capture only part of the exact wavefunction, which naturally leads to nonzero fidelity plateaus.

\subsection{VQE Performance and Convergence}

Despite the limitations observed at dissociation, the subspace restrictions provide a significant advantage in VQE implementation by simplifying the optimization. 

The following results were obtained by simulating the VQE algorithm on classical computers to evaluate the performance of different Hamiltonian encodings. We used the ansatz circuit shown in Figure \ref {fig:hea} with depths ranging from 1 to 7 layers. To reduce optimizer-specific bias, both Adam and L-BFGS were applied in each test. For each combination of encoding, ansatz depth, and optimizer, the calculation was repeated ten times with different random initializations of the parameters, and the resulting energy and convergence data were then collected for analysis.

\begin{figure}[h]
\centering
\resizebox{0.4\textwidth}{!}{%
  \text{U$(\mathbb{\boldsymbol{\theta}})$ =}
  \begin{quantikz}
  & \qw& \gate{Ry(\theta)} \gategroup[wires=6,steps=5,style={dotted,cap=round,inner sep=10pt}]{} & \ctrl{1} & \qw & \qw & \qw & \qw  &\gate{Ry(\theta)}& \qw\\
  & \qw & \gate{Ry(\theta)} & \targ{} &\ctrl{1} & \qw & \qw & \qw & \gate{Ry(\theta)}& \qw\\
  & \qw & \gate{Ry(\theta)} & \qw & \targ{} &\ctrl{1} & \qw & \qw &  \gate{Ry(\theta)}& \qw\\
  & \qw & \gate{Ry(\theta)} & \qw & \qw & \targ{} &\ctrl{1} & \qw &  \gate{Ry(\theta)}& \qw\\
  & \qw & \gate{Ry(\theta)} & \qw  & \qw & \qw & \targ{} & \qw & \gate{Ry(\theta)}& \qw\\
  &  &  & & \vdots &   &  &  \times n &  &
  \end{quantikz}
} 
\caption{\emph{\small{}Hardware efficiency Ansatz. \cite{HardwareEfficientAnsatz, kandala2017hardware}}\small{} \emph{n}: the number of layers}
\label{fig:hea}
\end{figure}

\begin{table}[h]
\centering
\scriptsize

\begin{tabular}{llccccccccc}
\toprule
\multirow{2}{*}{\text{Encoding}} & \multirow{2}{*}{\text{Optimizer}} & \multicolumn{3}{c}{\text{$\mathrm{H_{2}}$}} & \multicolumn{3}{c}{\text{HF}} & \multicolumn{3}{c}{\text{LiH}} \\
\cmidrule(lr){3-5} \cmidrule(lr){6-8} \cmidrule(lr){9-11}
 & & \text{Best (Layers)} & \text{Conv.} &  \text{Iter.} & \text{Best (Layers)} & \text{Conv.} & \text{Iter.} & \text{Best (Layers)} & \text{Conv.} & \text{Iter.} \\
\midrule
\multirow{2}{*}{MHS} & adam  & 0.00020 (4) & 100\%  & 65& 0.0060 (5) & 20\%  & 892& 0.0011 (7) & 100\% & 86\\
                     & lbfgs & 0.00028 (3) & 100\%  & 3& 0.0059 (7) & 100\% & 84& 0.0015 (3) & 100\% & 11\\
\midrule
\multirow{2}{*}{HS}  & adam  & 0.00034 (7) & 100\%  & 100& 0.0445 (2) & 10\%  & 852& 0.0173 (7) & 100\% & 789\\
                     & lbfgs & 0.00030 (2) & 100\%  & 16& 0.0071 (7) & 100\% & 303& 0.0590 (6) & 100\% & 355\\
\midrule
\multirow{2}{*}{JW}  & adam  & 0.95403 (6) & 100\%  & 101& 6.2084 (4) & 90\%  & 156& 1.2047 (7) & 60\% & 782\\
                     & lbfgs & 0.95385 (6) & 100\%  & 31& 6.2020 (2) & 100\% & 251& 1.2096 (2) & 100\% & 82\\
\bottomrule
\end{tabular}

\caption{\emph{\small{}VQE Performance Summary. }\small{} \emph{Best} is the smallest difference compared to FCI, reported in units of Hartree. \emph{layers} refers to the number of layers to achieve it. \emph{Conv.} indicates the success rate for VQE convergence within 1000 iterations. \emph{Iter.} is the average number of iterations required for successful convergence.}
\label{tab:comprehensive_peak}
\end{table}

\begin{table}[htbp]
\centering

\scriptsize
\begin{tabular}{lccccc}
\toprule
\text{Molecule} & \text{Encoding} & \text{Optimizer} & \text{Min. Layers} & \text{Threshold} & \text{Conv. Rate} \\
\midrule
\multirow{4}{*}{\text{$\mathrm{H_{2}}$}}   & MHS & adam  & \textbf{1} & \multirow{4}{*}{$<$0.0005} & 100\% \\
                               & MHS & lbfgs & \textbf{1} &                           & 100\% \\
                               & HS  & lbfgs & \textbf{2} &                           & 100\% \\
                               & HS  & adam  & \textbf{7} &                           & 100\% \\

\midrule
\multirow{4}{*}{\text{HF}}   & MHS & adam  & \textbf{2} & \multirow{4}{*}{$<$0.01}   & 20\% \\
                               & MHS & lbfgs & \textbf{3} &                           & 100\% \\
                               & HS  & lbfgs & \textbf{6} &                           & 100\% \\
                               & HS  & adam  & -- &                           & -- \\

\midrule
\multirow{2}{*}{\text{LiH}}  & MHS & adam  & \textbf{2} & \multirow{2}{*}{$<$0.0050} & 100\% \\
                               & MHS & lbfgs & \textbf{2} &                           & 100\% \\
                               & HS  & lbfgs & -- &                           & --\\
                               & HS  & adam  & -- &                           & -- \\
\bottomrule
\end{tabular}

\caption{\emph{\small{}Computational Efficiency }: Minimum layers required to reach the specified error threshold for each molecule, where the error threshold refers to the difference from FCI in units of Hartree. A dash signifies that the specified threshold
was not achieved within the evaluated ansatz depths and the allotted optimization budget.}
\label{tab:comprehensive_efficiency}
\end{table}

Despite the open-shell fidelity ceiling, the restricted subspaces provide a significant practical advantage for VQE, as summarized in Table \ref{tab:comprehensive_peak} and \ref{tab:comprehensive_efficiency}. By excluding states irrelevant to the targeted low-energy sector, MHS simplifies the optimization landscape and improves convergence to the ground state solution. As shown in our benchmarks, MHS achieves high accuracy, with errors on the order of $10^{-3}$ to $10^{-4}$ Hartree using shallow circuits of only 1 to 3 layers, depending on the molecule and optimizer. The L-BFGS optimizer is particularly robust for MHS, achieving 100\% convergence across all tested molecules. For LiH, for example, MHS converges in an average of only 11 iterations with L-BFGS, compared with 86 iterations for Adam. By contrast, the standard JW encoding is not competitive in these tests, yielding errors that are two to three orders of magnitude larger than those of MHS and often requiring substantially more iterations. Although the JW encoding in principle provides access to the true ground state, realizing this advantage in practice would require substantially more expressive circuit designs and significantly more optimization iterations.

\section{Conclusion and Discussion}

This study introduces the Subspace Restriction Scheme (SRS) as a systematic framework for reducing quantum resource requirements in molecular simulations. By integrating a generalized Hund's rule with molecular multiplicity constraints, the Multi-Hund Subspace (MHS) is constructed to serve as a physically motivated filter on the Fock space prior to qubit encoding.

A primary advantage of this framework is the significant reduction in the effective Fock space dimension. Theoretically, the basis ratio between the Particle Conservation Subspace and the Hund Subspace converges to $2^N$, indicating an asymptotic reduction of $N$ qubits when $M \gg N$. Practically, the restriction process is performed with an efficient classical preprocessing step, enabling large systems such as the $H_{22}$ chain that requires 44 qubits under standard JW encoding  to overcome memory bottlenecks and become computationally tractable.

Numerical results indicate that MHS achieves a favorable balance between resource reduction and physical accuracy. For closed-shell molecules near equilibrium, it maintains high fidelity to the reference ground state with respect to both RHF and UHF mean-field references. In this regime, the restricted pairing structure captures the essential ground state information while substantially reducing the size of the target subspace.

However, the method exhibits clear limitations. Due to the strict pairing constraints imposed by MHS, it cannot fully capture the multi-reference character required in strongly static correlated regimes. This limitation is evident in open-shell radicals and in the dissociation limit of potential energy surface scans, where the restricted subspaces lose accuracy and fidelity. Notably, near the Coulson--Fischer point, UHF-based constructions deteriorate abruptly, indicating that the restricted-pairing approach ceases to provide a quantitatively accurate description when symmetry breaking and strong static correlation are dominant.

Despite these limitations, MHS provides substantial practical advantages for variational quantum simulation. Restricting the search to a physically relevant ground state subspace improves the optimization behaviour of VQE and achieves high accuracy with a shallow ansatz. In benchmarks, MHS combined with L-BFGS demonstrates robust convergence across all tested molecules, whereas standard JW encoding performs substantially worse under equivalent circuit depth and optimization budgets. Although JW formally retains access to the full Fock space, realizing this advantage in practice would require more expressive circuits and significantly greater optimization effort.

The MHS framework demonstrates that physically motivated restriction of the Fock space offers an effective approach to achieving more resource-efficient quantum chemistry simulations. The results also clarify the boundaries of validity for this method. It is highly effective in weakly correlated regimes relevant to many near-equilibrium applications, but must be augmented or relaxed when strong static correlation and multireference effects are essential.

\section{Data and Code Availability Statement}
The code used to generate the restricted subspaces and the VQE simulation datasets are available from the corresponding author upon reasonable request.

\section{Acknowledgments} The authors thank Ming-Chien Hsu, Kui-Yo Chen, and Yu-Cheng Chen for suggestions, discussions, and technical support. Also, we thank the Quantum Technology Cloud Computing Center, National Cheng Kung University, Taiwan, for supporting our access to  Amazon Web Services AWS. This work is supported by the National Science and Technology Council in Taiwan with grant number MOST 111-2115-M-006 -002 -MY2, NSTC 113-2115-M-006 -015 -MY3, and also internally by the National Cheng Kung University.

\bibliographystyle{ieeetr}
\bibliography{IOP}

\newpage
\appendix
\section{Proofs}\label{appendix:Proofs}

The encoding map $\mathcal{E}$ is defined as a linear transformation that satisfies $\mathcal{E}^{\dagger}\mathcal{E}=I$. Lemma \ref{Lem: Orthonormal Basis}
ensures that generalized Pauli matrices can uniquely decompose any
Hermitian operator. 

\begin{lem}
\begin{singlespace}
\label{Lem: Orthonormal Basis} Fix $n\in\mathbb{N}.$ Let $W:=\{w\in M_{2^{n}}(\mathbb{C}):w=w^{\dagger}\}$
be the space of Hermitian matrices. The generalized Pauli matrices, including identity, form an orthonormal basis of $W$ with respect
to the Hilbert-Schmidt inner product $\langle A,B\rangle:=\frac{\mathrm{tr}(A^{\dagger}B)}{2^{n}}$.
\end{singlespace}
\end{lem}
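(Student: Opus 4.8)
The plan is to reduce the claim to a one-qubit computation via multiplicativity of the trace under tensor products, and then close the argument with a dimension count. First I would pin down the linear-algebra setting: $W$ is a vector space over $\mathbb{R}$ (multiplying a nonzero Hermitian matrix by $i$ destroys Hermiticity), and for Hermitian $A,B$ we have $\mathrm{tr}(A^{\dagger}B)=\mathrm{tr}(AB)=\mathrm{tr}(BA)\in\mathbb{R}$ with $\langle A,A\rangle=\mathrm{tr}(A^{\dagger}A)/2^{n}=\|A\|_{F}^{2}/2^{n}>0$ for $A\neq0$; hence $\langle\cdot,\cdot\rangle$ restricts to a genuine (real, symmetric, positive-definite) inner product on $W$, and $\dim_{\mathbb{R}}W=(2^{n})^{2}=4^{n}$. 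The generalized Pauli matrices are the $4^{n}$ operators $\sigma_{a}:=\sigma_{a_{1}}\otimes\cdots\otimes\sigma_{a_{n}}$ indexed by $a=(a_{1},\dots,a_{n})\in\{0,1,2,3\}^{n}$, where $\sigma_{0}=I_{2}$ and $\sigma_{1},\sigma_{2},\sigma_{3}$ are the usual single-qubit Pauli matrices; each $\sigma_{a}$ lies in $W$ because $(A\otimes B)^{\dagger}=A^{\dagger}\otimes B^{\dagger}$ and every factor is Hermitian.

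The core step is orthonormality. Using $(A_{1}\otimes\cdots\otimes A_{n})(B_{1}\otimes\cdots\otimes B_{n})=(A_{1}B_{1})\otimes\cdots\otimes(A_{n}B_{n})$ together with $\mathrm{tr}(C_{1}\otimes\cdots\otimes C_{n})=\prod_{j}\mathrm{tr}(C_{j})$, one obtains
\[
\langle\sigma_{a},\sigma_{b}\rangle=\frac{1}{2^{n}}\prod_{j=1}^{n}\mathrm{tr}\bigl(\sigma_{a_{j}}\sigma_{b_{j}}\bigr),
\]
so everything comes down to the single-qubit identity $\mathrm{tr}(\sigma_{c}\sigma_{d})=2\,\delta_{cd}$. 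This in turn follows from the $4\times4$ Pauli multiplication table: $\sigma_{c}^{2}=I_{2}$ has trace $2$, whereas for $c\neq d$ the product $\sigma_{c}\sigma_{d}$ is either $\pm i$ times a Pauli matrix or, when one of the two factors is $I_{2}$, one of $X,Y,Z$ — in all cases traceless. Substituting back gives $\langle\sigma_{a},\sigma_{b}\rangle=2^{-n}\prod_{j}2\,\delta_{a_{j}b_{j}}=\delta_{ab}$, which is exactly orthonormality.

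To finish, an orthonormal set is in particular linearly independent over $\mathbb{R}$, and we have exhibited $4^{n}$ such elements inside the $4^{n}$-dimensional real space $W$; hence they form an orthonormal basis. The asserted uniqueness of the generalized-Pauli decomposition of a Hermitian operator $H$ is then just uniqueness of coordinates with respect to a basis, the coefficients being $\langle\sigma_{a},H\rangle$. I do not expect a genuine obstacle in this argument; the only points that need care are (i) carrying out the count over $\mathbb{R}$ rather than $\mathbb{C}$, so that $\dim W=4^{n}$ matches the number of Pauli strings, and (ii) the bookkeeping behind the single-qubit trace identity, which is really the only content — once it is in place, the $n$-qubit statement follows formally from the tensor/trace identities above.
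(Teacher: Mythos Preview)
Your argument is correct and complete: the reduction to the single-qubit trace identity via multiplicativity of trace under tensor products, followed by the dimension count over $\mathbb{R}$, is the standard proof, and you handle the only subtle point (working over $\mathbb{R}$ so that $\dim W=4^{n}$ matches the number of Pauli strings) cleanly. The paper itself states this lemma without proof, so there is no approach to compare against; your write-up would in fact fill that gap.
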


Lemma \ref{Thm: encoding} shows
that the transformation preserves the spectrum and that any Hamiltonian
has a unique representation in the generalized Pauli matrix.

\MyMainThm* \begin{proof}
Clearly $\mathcal{E}^{'}\circ(H|_{\mathcal{F}^{'}}\oplus H^{'})=(H|_{\mathcal{F}^{'}}\oplus H^{'})^{\mathcal{E}^{'}}\circ\mathcal{E}^{'}$ as $(\mathcal{E}^{'})^{\dagger}\mathcal{E}^{'}=I.$ Since $\mathcal{E}^{'}$ is a change of coordinates, $(H|_{\mathcal{F}^{'}}\oplus H^{'})^{\mathcal{E}^{'}}$ is Hermitian.
\end{proof}

\begin{example} Let $\mathcal{F}:=\text{Span}\{|00\rangle_{\text{f}},|01\rangle_{\text{f}},|10\rangle_{\text{f}},|11\rangle_{\text{f}}\}$
and $H:=\sum_{i,j\in\Omega}h(i,j){}_{\text{f}}|i\rangle\langle j|_{\text{f}}$ be a Hamiltonian, where $h(i,j)\in\mathbb{C}$ and $\Omega:=\{00,01,10,11\}.$ Choose $\mathcal{F}^{'}=\text{Span}\{|00\rangle_{\text{f}},|01\rangle_{\text{f}},|10\rangle_{\text{f}}\}$,
then the reduced Hamiltonian $H|_{\mathcal{F}^{'}}=\sum_{i,j\in\Omega'} h(i,j){}_{\text{f}}|i\rangle\langle j|_{\text{f}}\ {\rm with}\ \Omega^{'}:=\{00,01,10\}.$ Here, the restriction operator is $P_{\mathcal{F}'}=\sum_{ i\in\Omega'}{}_{\text{f}}|i\rangle\langle i|_{\text{f}}$. A minimal extension of $H|_{\mathcal{F}^{'}}$ can be constructed as  $\sum _{i,j\in\Omega'}h(i,j){}_{\text{f}}|i\rangle\langle j|_{\text{f}}\oplus0_{V}$ with $0_{V}$ the zero map on $V:=\text{Span}\{v\}$.

Assume further that $H|_{\mathcal{F}^{'}}:={}_{\text{f}}|01\rangle\langle10|_{\text{f}}+{}_{\text{f}}|10\rangle\langle01|_{\text{f}}$ 
and let $\mathcal{Q}:=\text{Span}\{|00\rangle_{\text{q}},|01\rangle_{\text{q}},|10\rangle_{\text{q}},|11\rangle_{\text{q}}\}.$
For an encoding map $\mathcal{E}^{'}:\mathcal{F}'\oplus V\rightarrow Q$ defined by 
\[ \mathcal{E}'(x)=
\begin{cases}
|00\rangle_{\text{q}}, & x=|00\rangle_{\text{f}};\\
|01\rangle_{\text{q}}, & x=|01\rangle_{\text{f}};\\
|10\rangle_{\text{q}}, & x=|10\rangle_{\text{f}};\\
|11\rangle_{\text{q}}, & x=v,
\end{cases}
\]
the encoded reduced Hamiltonian has a representation in generalized Pauli's as  $I_{0}X_{1}+Z_{0}X_{1}$.
\end{example}

\begin{example} When we choose the target subspace $\mathcal{F}'$ as the entire Fock space and  $\mathcal{E}'(|i\rangle_{\rm f})=|i\rangle_{\rm q}$ as the identity map. This method is equivalent to the JW transformation.
\end{example}

\SizeEstimation* \begin{proof}
Let $k\in[0,N]$. Denote $N-k$ and $k$ as the number of spin-up
and spin-down electrons, respectively. Since we place a spin-down
electron in an orbital, only when there exists a spin-up electron,
the combination in the $M$ orbitals system is $\binom{M}{N-k}\tbinom{N-k}{k}$.
Furthermore, the generalized Hund's rule implies that $N>2k$. Therefore,
the number of fermionic bases satisfying the generalized Hund's rule
is $\sum\limits_{k=0}^{\lfloor\frac{N}{2}\rfloor}\binom{M}{N-k}\tbinom{N-k}{k}$.
\end{proof}


\Estimation* \begin{proof} This follows from 
{\small{}
\begin{align*}
& \lim_{M\rightarrow\infty}\gamma=\  \lim_{M\rightarrow\infty}\frac{\binom{2M}{N}}{\sum\limits_{k=0}^{\lfloor\frac{N}{2}\rfloor}\binom{M}{N-k}\binom{N-k}{k}}
=\ \lim_{M\rightarrow\infty}\frac{1}{\prod\limits_{k=0}^{N-1}(\frac{M-k}{2M-k})+\sum\limits_{k=1}^{\lfloor\frac{N}{2}\rfloor}\frac{\binom{M}{N-k}\binom{N-k}{k}}{\binom{2M}{N}}}=\  2^{N}, 
\end{align*}
where in the last equality, the summation goes to zero as $M\rightarrow \infty.$
}{\small\par}
\end{proof}

\section{Algorithm}\label{appendix:Algorithm}

In algorithm 1, $j$, $n$, and $\bigtriangleup$ represent acting
orbital, Fock space dimensions, and the operator type, respectively.
The first output is the phase, and the second is the corresponding
vector after operating.

\begin{algorithm}
\caption{}

\begin{algorithmic}[1]
\State input: $j,n,\bigtriangleup$
\\
\State $B \gets $ ARRAY[1,$2^n$][1,n]
\State $v \gets $ ARRAY[1,n] of 0
\\
\While{$i<j$}
	\State $v[i] \leftarrow 1$
\EndWhile
\\
\For{$i \gets 1$ \textbf{to} $2^n$}
	\For{$j \gets 1$ \textbf{to} $n$}
		\State $ B[i][j] \leftarrow (i>>j)\& 1 $
	\EndFor
\EndFor
\\
\State $P \gets Bv$
\\
\If {$\bigtriangleup$ is creation}
	\State $L \gets (B[:][j] == 0)$ 
	\State $B[:][j] \gets 1$
\ElsIf {$\bigtriangleup$ is annhilation}
	\State $L \gets (B[:][j] == 1)$ 
	\State $B[:][j] \gets 0$	
\EndIf
\\
\State output: $(-1)^{P[L]}, B[L]$
\end{algorithmic}
\end{algorithm}

\section{Device Details}

\begin{table}[H]
\begin{tabular}{cccc}
\hline
Task            & Hardware                                                      & Backend                   & Key Parameters \\ \hline
Non-VQE part    & \begin{tabular}[c]{@{}c@{}}g4dn.8xlarge\cite{braket}\end{tabular} & Classical Eigensolver\cite{cupy_learningsys2017}    & -              \\
VQE (H$_2$, HF) & Intel 13900H                                             & Pennylane lightning-qubit\cite{PennyLane}
& 1000 shots     \\
VQE (LiH)       & Intel E5-2698 v4                                                          & Pennylane lightning-qubit\cite{PennyLane}
& 1000 shots     \\ \hline
\end{tabular}
\end{table}

\newpage
\section{Simulation Data in the STO-3G basis set}\label{appendix:Data}

\begin{table}[!ht]
\centering
\footnotesize 
\begin{tabular}{l cc cc cc cc cc}
\toprule
\multirow{2}{*}{Molecule} & \multirow{2}{*}{$N_{so}$} & \multirow{2}{*}{$N_{e}$} & \multicolumn{2}{c}{PCS} & \multicolumn{2}{c}{MS} & \multicolumn{2}{c}{HS} & \multicolumn{2}{c}{MHS} \\
\cmidrule(lr){4-5} \cmidrule(lr){6-7} \cmidrule(lr){8-9} \cmidrule(lr){10-11}
 &  &  & Size & Qubits & Size & Qubits & Size & Qubits & Size & Qubits \\
\midrule
$\mathrm{H_{2}}$   & 4  & \textbf{2}  & 6      & 3  & 4      & 2  & 3     & 2  & 2   & 1 \\
$\mathrm{LiH}$     & 12 & \textbf{4}  & 495    & 9  & 225    & 8  & 90    & 7  & 15  & 4 \\
$\mathrm{BeH_{2}}$ & 14 & \textbf{6}  & 3003   & 12 & 1225   & 11 & 357   & 9  & 35  & 6 \\
$\mathrm{CH}$      & 12 & \textbf{7}  & 792    & 10 & 300    & 9  & 126   & 7  & 60  & 6 \\
$\mathrm{NH}$      & 12 & \textbf{8}  & 495    & 9  & 120    & 7  & 90    & 7  & 60  & 6 \\
$\mathrm{BH_{3}}$  & 16 & \textbf{8}  & 12870  & 14 & 4900   & 13 & 1107  & 11 & 70  & 7 \\
$\mathrm{OH}$      & 12 & \textbf{9}  & 220    & 8  & 90     & 7  & 50    & 6  & 30  & 5 \\
$\mathrm{HF}$      & 12 & \textbf{10} & 66     & 7  & 36     & 6  & 21    & 5  & 6   & 3 \\
$\mathrm{H_{2}O}$  & 14 & \textbf{10} & 1001   & 10 & 441    & 9  & 161   & 8  & 21  & 5 \\
$\mathrm{NH_{3}}$  & 16 & \textbf{10} & 8008   & 13 & 3136   & 12 & 784   & 10 & 56  & 6 \\
$\mathrm{CH_{4}}$  & 18 & \textbf{10} & 43758  & 16 & 15876  & 14 & 2907  & 12 & 126 & 7 \\
$\mathrm{CN}$      & 20 & \textbf{13} & 77520  & 17 & 25200  & 15 & 4740  & 13 & 840 & 10 \\
$\mathrm{N_{2}}$   & 20 & \textbf{14} & 38760  & 16 & 14400  & 14 & 2850  & 12 & 120 & 7 \\
$\mathrm{CO}$      & 20 & \textbf{14} & 38760  & 16 & 14400  & 14 & 2850  & 12 & 120 & 7 \\
$\mathrm{CNH}$     & 22 & \textbf{14} & 319770 & 19 & 108900 & 17 & 14355 & 14 & 330 & 9 \\
$\mathrm{NO}$      & 20 & \textbf{15} & 15504  & 14 & 5400   & 13 & 1452  & 11 & 360 & 9 \\
$\mathrm{O_{2}}$   & 20 & \textbf{16} & 4845   & 13 & 1200   & 11 & 615   & 10 & 360 & 9 \\
$\mathrm{NF}$      & 20 & \textbf{16} & 4845   & 13 & 1200   & 11 & 615   & 10 & 360 & 9 \\
$\mathrm{H_{2}CO}$ & 24 & \textbf{16} & 735471 & 20 & 245025 & 18 & 28314 & 15 & 495 & 9 \\
$\mathrm{H_{2}O_{2}}$ & 24 & \textbf{18} & 134596 & 18 & 48400  & 16 & 8074  & 13 & 220 & 8 \\
$\mathrm{H_{22}}$ & 44 & \textbf{22} & $\sim10^{12}$*  & 41* & $\sim10^{11}$* & 39* & $\sim10^{9}$* & 32* & 705432 & 20 \\
\bottomrule
\end{tabular}
\caption{Tested molecular system size and qubit requirements across different reduction methods in the minimal basis set. The asterisk indicates that the molecule-subspace combination has not been tested. $N_{so}$ and $N_{e}$ refer to the number of spin orbitals and electrons, respectively.}\label{tab:systemsize}
\end{table}

\begin{table}[htbp]
\centering
\scriptsize 
\setlength{\tabcolsep}{3pt}
\caption{STO-3G Ground State Results for the \textbf{PCS}. All calculations achieved a successful status. $D$: Dimension; $F$: Fidelity to PCS; $E_{\text{sub}}$: Subspace Energy; $E_{\text{ref}}$: Reference SCF Energy; $E_{\text{FCI}}$: FCI Energy. All results are evaluated by classical eigensolvers without specific initialization.}
\label{tab:pcs}
\begin{tabular}{l c r c r r r r r}
\toprule
& & & & \multicolumn{3}{c}{Energy results ($E_h$)} & & \\
\cmidrule(lr){5-7}
Molecule & Ref & $D$ & $F$ & \multicolumn{1}{c}{$E_{\text{sub}}$} & \multicolumn{1}{c}{$E_{\text{ref}}$} & \multicolumn{1}{c}{$E_{\text{FCI}}$} & \multicolumn{1}{c}{Error ($\Delta E$)} & \multicolumn{1}{c}{$\langle S^2 \rangle_{\text{SCF}}$} \\
\midrule
\multicolumn{9}{l}{\textbf{Subspace: PCS}} \\
\midrule
\multirow{2}{*}{$\mathrm{H_2}$} 
  & RHF & 6 & 1.0000 & -1.1373 & -1.1167 & -1.1373 & 2.56e-7 & 0 \\
  & UHF & 6 & 1.0000 & -1.1373 & -1.1167 & -1.1373 & 2.21e-7 & 2.60e-11 \\ \addlinespace
\multirow{2}{*}{$\mathrm{HF}$} 
  & RHF & 66 & 1.0000 & -98.5966 & -98.5708 & -98.5966 & 2.33e-5 & 0 \\
  & UHF & 66 & 1.0000 & -98.5966 & -98.5708 & -98.5966 & 2.25e-5 & -8.88e-16 \\ \addlinespace
\multirow{2}{*}{$\mathrm{LiH}$} 
  & RHF & 495 & 1.0000 & -7.8824 & -7.8620 & -7.8824 & 1.27e-6 & 0 \\
  & UHF & 495 & 1.0000 & -7.8824 & -7.8620 & -7.8824 & 1.27e-6 & 5.55e-11 \\ \addlinespace
\multirow{2}{*}{$\mathrm{OH}$} 
  & RHF & 220 & 1.0000 & -74.3871 & -74.3626 & -74.3871 & 1.33e-5 & 0.7533 \\
  & UHF & 220 & 1.0000 & -74.3871 & -74.3626 & -74.3871 & 5.66e-6 & 0.7533 \\ \addlinespace
\multirow{2}{*}{$\mathrm{H_2O}$} 
  & RHF & 1001 & 1.0000 & -75.0126 & -74.9630 & -75.0126 & 1.71e-6 & 0 \\
  & UHF & 1001 & 1.0000 & -75.0126 & -74.9630 & -75.0126 & 1.71e-6 & 1.09e-11 \\ \addlinespace
\multirow{2}{*}{$\mathrm{BeH_2}$} 
  & RHF & 3003 & 1.0000 & -15.5952 & -15.5603 & -15.5952 & 2.64e-6 & 0 \\
  & UHF & 3003 & 1.0000 & -15.5952 & -15.5603 & -15.5952 & 1.17e-6 & 1.15e-14 \\ \addlinespace
\multirow{2}{*}{$\mathrm{NH}$} 
  & RHF & 495 & 1.0000 & -54.2847 & -54.2619 & -54.2847 & 3.03e-5 & 2.0118 \\
  & UHF & 495 & 1.0000 & -54.2847 & -54.2619 & -54.2847 & 2.26e-5 & 2.0118 \\ \addlinespace
\multirow{2}{*}{$\mathrm{O_2}$} 
  & RHF & 4845 & 1.0000 & -147.7441 & -147.6339 & -147.7440 & 4.17e-5 & 2.0034 \\
  & UHF & 4845 & 1.0000 & -147.7442 & -147.6339 & -147.7440 & 1.79e-4 & 2.0034 \\ \addlinespace
\multirow{2}{*}{$\mathrm{CH}$} 
  & RHF & 792 & 1.0000 & -37.8114 & -37.7699 & -37.8113 & 2.31e-5 & 0.7532 \\
  & UHF & 792 & 1.0000 & -37.8114 & -37.7905 & -37.8113 & 7.79e-6 & 3.7544 \\ \addlinespace
\multirow{2}{*}{$\mathrm{NH_3}$} 
  & RHF & 8008 & 1.0000 & -55.5192 & -55.4541 & -55.5192 & 3.77e-5 & 0 \\
  & UHF & 8008 & 1.0000 & -55.5193 & -55.4541 & -55.5192 & 3.86e-5 & 6.40e-11 \\ \addlinespace
\multirow{2}{*}{$\mathrm{BH_3}$} 
  & RHF & 12870 & 1.0000 & -26.1224 & -26.0690 & -26.1224 & 1.00e-5 & 0 \\
  & UHF & 12870 & 1.0000 & -26.1224 & -26.0690 & -26.1224 & 1.41e-6 & 0 \\ \addlinespace
\multirow{2}{*}{$\mathrm{NO}$} 
  & RHF & 15504 & 1.0000 & -127.6594 & -127.5265 & -127.6593 & 6.21e-5 & 0.7513 \\
  & UHF & 15504 & 1.0000 & -127.6594 & -127.5265 & -127.6593 & 6.21e-5 & 0.7511 \\ \addlinespace
\multirow{2}{*}{$\mathrm{N_2}$} 
  & RHF & 38760 & 1.0000 & -107.6529 & -107.4959 & -107.6528 & 1.17e-4 & 0 \\
  & UHF & 38760 & 1.0000 & -107.6528 & -107.4959 & -107.6528 & 5.49e-6 & 8.66e-12 \\ \addlinespace
\multirow{2}{*}{$\mathrm{NF}$} 
  & RHF & 4845 & 1.0000 & -151.7929 & -151.7328 & -151.7929 & 8.73e-5 & 2.0114 \\
  & UHF & 4845 & 1.0000 & -151.7929 & -151.7328 & -151.7929 & 4.15e-5 & 2.0114 \\ \addlinespace
\multirow{2}{*}{$\mathrm{CN}$} 
  & RHF & 77520 & 1.0000 & -91.1732 & -90.9804 & -91.1732 & 3.57e-6 & 0.9369 \\
  & UHF & 77520 & 1.0000 & -91.1733 & -90.9831 & -91.1732 & 1.17e-5 & 0.8354 \\ \addlinespace
\multirow{2}{*}{$\mathrm{CO}$} 
  & RHF & 38760 & 1.0000 & -111.3635 & -111.2246 & -111.3634 & 1.45e-4 & 0 \\
  & UHF & 38760 & 1.0000 & -111.3634 & -111.2246 & -111.3634 & 3.78e-5 & 3.47e-11 \\ \addlinespace
\multirow{2}{*}{$\mathrm{CH_4}$} 
  & RHF & 43758 & 1.0000 & -39.8057 & -39.7268 & -39.8057 & 2.90e-5 & 0 \\
  & UHF & 43758 & 1.0000 & -39.8057 & -39.7268 & -39.8057 & 2.29e-6 & 2.66e-15 \\ \addlinespace
\multirow{2}{*}{$\mathrm{H_2O_2}$} 
  & RHF & 134596 & 1.0000 & -148.8605 & -148.7490 & -148.8605 & 1.74e-5 & 0 \\
  & UHF & 134596 & 1.0000 & -148.8605 & -148.7490 & -148.8605 & 1.31e-5 & 1.64e-9 \\ \addlinespace
\multirow{2}{*}{$\mathrm{CNH}$} 
  & RHF & 319770 & 1.0000 & -91.7894 & -91.6437 & -91.7893 & 7.12e-5 & 0 \\
  & UHF & 319770 & 1.0000 & -91.7894 & -91.6437 & -91.7893 & 4.07e-5 & 4.14e-11 \\ \addlinespace
\multirow{2}{*}{$\mathrm{H_2CO}$} 
  & RHF & 735471 & 1.0000 & -112.4982 & -112.3538 & -112.4981 & 3.45e-5 & 0 \\
  & UHF & 735471 & 1.0000 & -112.4980 & -112.3538 & -112.4981 & 1.03e-4 & 4.06e-10 \\ 

\bottomrule
\end{tabular}
\end{table}
\begin{table}[htbp]
\centering
\scriptsize 
\setlength{\tabcolsep}{3pt}
\caption{STO-3G Ground State Results for the \textbf{MS}. All calculations achieved a successful status. $D$: Dimension; $F$: Fidelity to PCS; $E_{\text{sub}}$: Subspace Energy; $E_{\text{ref}}$: Reference SCF Energy; $E_{\text{FCI}}$: FCI Energy. All results are evaluated by classical eigensolvers without specific initialization.}
\label{tab:MS}
\begin{tabular}{l c r c r r r r r}
\toprule
& & & & \multicolumn{3}{c}{Energy results ($E_h$)} & & \\
\cmidrule(lr){5-7}
Molecule & Ref & $D$ & $F$ & \multicolumn{1}{c}{$E_{\text{sub}}$} & \multicolumn{1}{c}{$E_{\text{ref}}$} & \multicolumn{1}{c}{$E_{\text{FCI}}$} & \multicolumn{1}{c}{Error ($\Delta E$)} & \multicolumn{1}{c}{$\langle S^2 \rangle_{\text{SCF}}$} \\

\midrule
\multicolumn{9}{l}{\textbf{Subspace: MS}} \\
\midrule
\multirow{2}{*}{$\mathrm{H_2}$} 
  & RHF & 4 & 1.0000 & -1.1373 & -1.1167 & -1.1373 & 2.56e-7 & 0 \\
  & UHF & 4 & 1.0000 & -1.1373 & -1.1167 & -1.1373 & 2.21e-7 & 2.60e-11 \\ \addlinespace
\multirow{2}{*}{$\mathrm{HF}$} 
  & RHF & 36 & 1.0000 & -98.5966 & -98.5708 & -98.5966 & 2.33e-5 & 0 \\
  & UHF & 36 & 0.9988 & -98.5966 & -98.5708 & -98.5966 & 2.25e-5 & -7.11e-15 \\ \addlinespace
\multirow{2}{*}{$\mathrm{LiH}$} 
  & RHF & 225 & 1.0000 & -7.8824 & -7.8620 & -7.8824 & 1.27e-6 & 0 \\
  & UHF & 225 & 0.9970 & -7.8824 & -7.8620 & -7.8824 & 3.18e-6 & 5.55e-11 \\ \addlinespace
\multirow{2}{*}{$\mathrm{OH}$} 
  & RHF & 90 & 0.0009 & -74.3871 & -74.3626 & -74.3871 & 9.60e-6 & 0.7533 \\
  & UHF & 90 & 0.0858 & -74.3871 & -74.3626 & -74.3871 & 9.60e-6 & 0.7533 \\ \addlinespace
\multirow{2}{*}{$\mathrm{H_2O}$} 
  & RHF & 441 & 1.0000 & -75.0126 & -74.9630 & -75.0126 & 9.34e-6 & 0 \\
  & UHF & 441 & 1.0000 & -75.0126 & -74.9630 & -75.0126 & 1.71e-6 & 1.09e-11 \\ \addlinespace
\multirow{2}{*}{$\mathrm{BeH_2}$} 
  & RHF & 1225 & 1.0000 & -15.5952 & -15.5603 & -15.5952 & 7.36e-7 & 0 \\
  & UHF & 1225 & 0.9994 & -15.5952 & -15.5603 & -15.5952 & 4.99e-6 & 1.24e-14 \\ \addlinespace
\multirow{2}{*}{$\mathrm{NH}$} 
  & RHF & 120 & 0.9919 & -54.2847 & -54.2619 & -54.2847 & 1.50e-5 & 2.0118 \\
  & UHF & 120 & 0.9525 & -54.2847 & -54.2619 & -54.2847 & 1.50e-5 & 2.0118 \\ \addlinespace
\multirow{2}{*}{$\mathrm{O_2}$} 
  & RHF & 1200 & 0.3786 & -147.7440 & -147.6339 & -147.7440 & 6.51e-5 & 2.0034 \\
  & UHF & 1200 & 0.6094 & -147.7440 & -147.6339 & -147.7440 & 4.05e-6 & 2.0034 \\ \addlinespace
\multirow{2}{*}{$\mathrm{CH}$} 
  & RHF & 300 & 0.0000 & -37.8114 & -37.7699 & -37.8113 & 1.54e-5 & 0.7532 \\
  & UHF & 300 & 0.0008 & -37.8114 & -37.7905 & -37.8113 & 7.79e-6 & 3.7544 \\ \addlinespace
\multirow{2}{*}{$\mathrm{NH_3}$} 
  & RHF & 3136 & 1.0000 & -55.5192 & -55.4541 & -55.5192 & 1.48e-5 & 0 \\
  & UHF & 3136 & 1.0000 & -55.5192 & -55.4541 & -55.5192 & 4.66e-7 & 6.40e-11 \\ \addlinespace
\multirow{2}{*}{$\mathrm{BH_3}$} 
  & RHF & 4900 & 1.0000 & -26.1223 & -26.0690 & -26.1224 & 2.05e-5 & 0 \\
  & UHF & 4900 & 1.0000 & -26.1224 & -26.0690 & -26.1224 & 6.22e-6 & 4.00e-15 \\ \addlinespace
\multirow{2}{*}{$\mathrm{NO}$} 
  & RHF & 5400 & 0.0437 & -127.6593 & -127.5254 & -127.6593 & 9.05e-5 & 0.7931 \\
  & UHF & 5400 & 0.0000 & -127.6591 & -127.5254 & -127.6593 & 2.58e-4 & 0.8003 \\ \addlinespace
\multirow{2}{*}{$\mathrm{N_2}$} 
  & RHF & 14400 & 0.8824 & -107.6529 & -107.4959 & -107.6528 & 1.17e-4 & 0 \\
  & UHF & 14400 & 0.8684 & -107.6528 & -107.4959 & -107.6528 & 2.50e-5 & 8.66e-12 \\ \addlinespace
\multirow{2}{*}{$\mathrm{NF}$} 
  & RHF & 1200 & 0.7637 & -151.7928 & -151.7328 & -151.7929 & 1.95e-5 & 2.0114 \\
  & UHF & 1200 & 0.6244 & -151.7928 & -151.7328 & -151.7929 & 3.48e-5 & 2.0114 \\ \addlinespace
\multirow{2}{*}{$\mathrm{CN}$} 
  & RHF & 25200 & 0.9851 & -91.1732 & -90.9804 & -91.1732 & 4.06e-6 & 0.9369 \\
  & UHF & 25200 & 0.3860 & -91.1733 & -90.9798 & -91.1732 & 1.17e-5 & 0.8995 \\ \addlinespace
\multirow{2}{*}{$\mathrm{CO}$} 
  & RHF & 14400 & 1.0000 & -111.3635 & -111.2246 & -111.3634 & 1.45e-4 & 0 \\
  & UHF & 14400 & 1.0000 & -111.3634 & -111.2246 & -111.3634 & 3.78e-5 & 3.47e-11 \\ \addlinespace
\multirow{2}{*}{$\mathrm{CH_4}$} 
  & RHF & 15876 & 0.9418 & -39.8057 & -39.7268 & -39.8057 & 2.14e-5 & 0 \\
  & UHF & 15876 & 0.9194 & -39.8057 & -39.7268 & -39.8057 & 1.52e-6 & 1.78e-15 \\ \addlinespace
\multirow{2}{*}{$\mathrm{H_2O_2}$} 
  & RHF & 48400 & 1.0000 & -148.8606 & -148.7490 & -148.8605 & 3.26e-5 & 0 \\
  & UHF & 48400 & 1.0000 & -148.8605 & -148.7490 & -148.8605 & 1.31e-5 & 1.64e-9 \\ \addlinespace
\multirow{2}{*}{$\mathrm{CNH}$} 
  & RHF & 108900 & 0.9634 & -91.7894 & -91.6437 & -91.7893 & 7.12e-5 & 0 \\
  & UHF & 108900 & 0.9481 & -91.7894 & -91.6437 & -91.7893 & 5.60e-5 & 4.14e-11 \\ \addlinespace
\multirow{2}{*}{$\mathrm{H_2CO}$} 
  & RHF & 245025 & 1.0000 & -112.4982 & -112.3538 & -112.4981 & 3.45e-5 & 0 \\
  & UHF & 245025 & 1.0000 & -112.4980 & -112.3538 & -112.4981 & 8.76e-5 & 4.06e-10 \\ 
\bottomrule
\end{tabular}
\end{table}

\begin{table}[htbp]
\centering
\scriptsize 
\setlength{\tabcolsep}{3pt}
\caption{STO-3G Ground State Results for the \textbf{HS}. All calculations achieved a successful status. $D$: Dimension; $F$: Fidelity to PCS; $E_{\text{sub}}$: Subspace Energy; $E_{\text{ref}}$: Reference SCF Energy; $E_{\text{FCI}}$: FCI Energy. All results are evaluated by classical eigensolvers without specific initialization.}
\label{tab:HS}
\begin{tabular}{l c r c r r r r r}
\toprule
& & & & \multicolumn{3}{c}{Energy results ($E_h$)} & & \\
\cmidrule(lr){5-7}
Molecule & Ref & $D$ & $F$ & \multicolumn{1}{c}{$E_{\text{sub}}$} & \multicolumn{1}{c}{$E_{\text{ref}}$} & \multicolumn{1}{c}{$E_{\text{FCI}}$} & \multicolumn{1}{c}{Error ($\Delta E$)} & \multicolumn{1}{c}{$\langle S^2 \rangle_{\text{SCF}}$} \\
\midrule
\multicolumn{9}{l}{\textbf{Subspace: HS}} \\
\midrule
\multirow{2}{*}{$\mathrm{H_2}$} 
  & RHF & 3 & 1.0000 & -1.1373 & -1.1167 & -1.1373 & 2.56e-7 & 0 \\
  & UHF & 3 & 1.0000 & -1.1373 & -1.1167 & -1.1373 & 2.56e-7 & 2.60e-11 \\ \addlinespace
\multirow{2}{*}{$\mathrm{HF}$} 
  & RHF & 21 & 0.9979 & -98.5917 & -98.5708 & -98.5966 & 4.88e-3 & 0 \\
  & UHF & 21 & 0.9976 & -98.5912 & -98.5708 & -98.5966 & 5.39e-3 & -5.33e-15 \\ \addlinespace
\multirow{2}{*}{$\mathrm{LiH}$} 
  & RHF & 90 & 0.9894 & -7.8780 & -7.8620 & -7.8824 & 4.40e-3 & 0 \\
  & UHF & 90 & 0.9878 & -7.8768 & -7.8620 & -7.8824 & 5.62e-3 & 5.55e-11 \\ \addlinespace
\multirow{2}{*}{$\mathrm{OH}$} 
  & RHF & 50 & 0.1021 & -74.3801 & -74.3626 & -74.3871 & 7.05e-3 & 0.7533 \\
  & UHF & 50 & 0.4910 & -74.3801 & -74.3626 & -74.3871 & 7.05e-3 & 0.7533 \\ \addlinespace
\multirow{2}{*}{$\mathrm{H_2O}$} 
  & RHF & 161 & 0.9872 & -74.9881 & -74.9630 & -75.0126 & 2.45e-2 & 0 \\
  & UHF & 161 & 0.9872 & -74.9881 & -74.9630 & -75.0126 & 2.44e-2 & 1.09e-11 \\ \addlinespace
\multirow{2}{*}{$\mathrm{BeH_2}$} 
  & RHF & 357 & 0.9903 & -15.5787 & -15.5603 & -15.5952 & 1.65e-2 & 0 \\
  & UHF & 357 & 0.9829 & -15.5744 & -15.5603 & -15.5952 & 2.08e-2 & 1.02e-14 \\ \addlinespace
\multirow{2}{*}{$\mathrm{NH}$} 
  & RHF & 90 & 0.9944 & -54.2779 & -54.2619 & -54.2847 & 6.71e-3 & 2.0118 \\
  & UHF & 90 & 0.9482 & -54.2752 & -54.2619 & -54.2847 & 9.46e-3 & 2.0118 \\ \addlinespace
\multirow{2}{*}{$\mathrm{O_2}$} 
  & RHF & 615 & 0.3833 & -147.6999 & -147.6339 & -147.7440 & 4.42e-2 & 2.0034 \\
  & UHF & 615 & 0.5950 & -147.6984 & -147.6339 & -147.7440 & 4.56e-2 & 2.0034 \\ \addlinespace
\multirow{2}{*}{$\mathrm{CH}$} 
  & RHF & 126 & 0.0005 & -37.8063 & -37.7699 & -37.8113 & 5.07e-3 & 0.7532 \\
  & UHF & 126 & 0.9688 & -37.8069 & -37.7905 & -37.8113 & 4.49e-3 & 3.7544 \\ \addlinespace
\multirow{2}{*}{$\mathrm{NH_3}$} 
  & RHF & 784 & 0.9824 & -55.4842 & -55.4541 & -55.5192 & 3.50e-2 & 0 \\
  & UHF & 784 & 0.9824 & -55.4842 & -55.4541 & -55.5192 & 3.51e-2 & 6.40e-11 \\ \addlinespace
\multirow{2}{*}{$\mathrm{BH_3}$} 
  & RHF & 1107 & 0.9843 & -26.0935 & -26.0690 & -26.1224 & 2.88e-2 & 0 \\
  & UHF & 1107 & 0.9843 & -26.0935 & -26.0690 & -26.1224 & 2.89e-2 & 1.78e-15 \\ \addlinespace
\multirow{2}{*}{$\mathrm{NO}$} 
  & RHF & 1452 & 0.7013 & -127.5772 & -127.5263 & -127.6593 & 8.22e-2 & 0.7518 \\
  & UHF & 1452 & 0.0040 & -127.5773 & -127.5264 & -127.6593 & 8.21e-2 & 0.7515 \\ \addlinespace
\multirow{2}{*}{$\mathrm{N_2}$} 
  & RHF & 2850 & 0.8852 & -107.5840 & -107.4959 & -107.6528 & 6.88e-2 & 0 \\
  & UHF & 2850 & 0.8900 & -107.5508 & -107.4959 & -107.6528 & 1.02e-1 & 8.66e-12 \\ \addlinespace
\multirow{2}{*}{$\mathrm{NF}$} 
  & RHF & 615 & 0.7375 & -151.7577 & -151.7328 & -151.7929 & 3.52e-2 & 2.0114 \\
  & UHF & 615 & 0.6019 & -151.7570 & -151.7328 & -151.7929 & 3.58e-2 & 2.0114 \\ \addlinespace
\multirow{2}{*}{$\mathrm{CN}$} 
  & RHF & 4740 & 0.6949 & -91.0002 & -90.9804 & -91.1732 & 1.73e-1 & 0.9369 \\
  & UHF & 4740 & 0.4326 & -91.0575 & -90.9826 & -91.1732 & 1.16e-1 & 0.8415 \\ \addlinespace
\multirow{2}{*}{$\mathrm{CO}$} 
  & RHF & 2850 & 0.9530 & -111.2883 & -111.2246 & -111.3634 & 7.51e-2 & 0 \\
  & UHF & 2850 & 0.9292 & -111.2608 & -111.2246 & -111.3634 & 1.03e-1 & 3.47e-11 \\ \addlinespace
\multirow{2}{*}{$\mathrm{CH_4}$} 
  & RHF & 2907 & 0.9691 & -39.7547 & -39.7268 & -39.8057 & 5.10e-2 & 0 \\
  & UHF & 2907 & 0.9572 & -39.7312 & -39.7268 & -39.8057 & 7.44e-2 & 0 \\ \addlinespace
\multirow{2}{*}{$\mathrm{H_2O_2}$} 
  & RHF & 8074 & 0.9543 & -148.7895 & -148.7490 & -148.8605 & 7.10e-2 & 0 \\
  & UHF & 8074 & 0.9543 & -148.7895 & -148.7490 & -148.8605 & 7.10e-2 & 1.64e-9 \\ \addlinespace
\multirow{2}{*}{$\mathrm{CNH}$} 
  & RHF & 14355 & 0.9507 & -91.7087 & -91.6437 & -91.7893 & 8.06e-2 & 0 \\
  & UHF & 14355 & 0.9285 & -91.6761 & -91.6437 & -91.7893 & 1.13e-1 & 4.14e-11 \\ \addlinespace
\multirow{2}{*}{$\mathrm{H_2CO}$} 
  & RHF & 28314 & 0.9590 & -112.4277 & -112.3538 & -112.4981 & 7.05e-2 & 0 \\
  & UHF & 28314 & 0.9590 & -112.4277 & -112.3538 & -112.4981 & 7.04e-2 & 4.06e-10 \\ 
\bottomrule
\end{tabular}
\end{table}

\begin{table}[htbp]
\centering
\scriptsize 
\setlength{\tabcolsep}{3pt}
\caption{STO-3G Ground State Results for the \textbf{MHS}. All calculations achieved a successful status. $D$: Dimension; $F$: Fidelity to PCS; $E_{\text{sub}}$: Subspace Energy; $E_{\text{ref}}$: Reference SCF Energy; $E_{\text{FCI}}$: FCI Energy. All results are evaluated by classical eigensolvers without specific initialization.}
\label{tab:MHS}
\begin{tabular}{l c r c r r r r r}
\toprule
& & & & \multicolumn{3}{c}{Energy results ($E_h$)} & & \\
\cmidrule(lr){5-7}
Molecule & Ref & $D$ & $F$ & \multicolumn{1}{c}{$E_{\text{sub}}$} & \multicolumn{1}{c}{$E_{\text{ref}}$} & \multicolumn{1}{c}{$E_{\text{FCI}}$} & \multicolumn{1}{c}{Error ($\Delta E$)} & \multicolumn{1}{c}{$\langle S^2 \rangle_{\text{SCF}}$} \\
\midrule
\multicolumn{9}{l}{\textbf{Subspace: MHS}} \\
\midrule
\multirow{2}{*}{$\mathrm{H_2}$} 
  & RHF & 2 & 1.0000 & -1.1373 & -1.1167 & -1.1373 & 1.37e-7 & 0 \\
  & UHF & 2 & 1.0000 & -1.1373 & -1.1167 & -1.1373 & 1.37e-7 & 2.60e-11 \\ \addlinespace
\multirow{2}{*}{$\mathrm{HF}$} 
  & RHF & 6 & 0.9979 & -98.5917 & -98.5708 & -98.5966 & 4.88e-3 & 0 \\
  & UHF & 6 & 0.9976 & -98.5917 & -98.5708 & -98.5966 & 4.87e-3 & -3.55e-15 \\ \addlinespace
\multirow{2}{*}{$\mathrm{LiH}$} 
  & RHF & 15 & 0.9894 & -7.8780 & -7.8620 & -7.8824 & 4.40e-3 & 0 \\
  & UHF & 15 & 0.9878 & -7.8768 & -7.8620 & -7.8824 & 5.61e-3 & 5.55e-11 \\ \addlinespace
\multirow{2}{*}{$\mathrm{OH}$} 
  & RHF & 30 & 0.1021 & -74.3801 & -74.3626 & -74.3871 & 7.05e-3 & 0.7533 \\
  & UHF & 30 & 0.4923 & -74.3801 & -74.3626 & -74.3871 & 7.04e-3 & 0.7533 \\ \addlinespace
\multirow{2}{*}{$\mathrm{H_2O}$} 
  & RHF & 21 & 0.9872 & -74.9881 & -74.9630 & -75.0126 & 2.45e-2 & 0 \\
  & UHF & 21 & 0.9872 & -74.9881 & -74.9630 & -75.0126 & 2.44e-2 & 1.09e-11 \\ \addlinespace
\multirow{2}{*}{$\mathrm{BeH_2}$} 
  & RHF & 35 & 0.9903 & -15.5787 & -15.5603 & -15.5952 & 1.65e-2 & 0 \\
  & UHF & 35 & 0.9901 & -15.5769 & -15.5603 & -15.5952 & 1.83e-2 & 9.77e-15 \\ \addlinespace
\multirow{2}{*}{$\mathrm{NH}$} 
  & RHF & 60 & 0.9942 & -54.2769 & -54.2619 & -54.2847 & 7.77e-3 & 2.0118 \\
  & UHF & 60 & 0.9467 & -54.2768 & -54.2619 & -54.2847 & 7.85e-3 & 2.0118 \\ \addlinespace
\multirow{2}{*}{$\mathrm{O_2}$} 
  & RHF & 360 & 0.3974 & -147.6986 & -147.6339 & -147.7440 & 4.55e-2 & 2.0034 \\
  & UHF & 360 & 0.5946 & -147.6996 & -147.6339 & -147.7440 & 4.45e-2 & 2.0034 \\ \addlinespace
\multirow{2}{*}{$\mathrm{CH}$} 
  & RHF & 60 & 0.0000 & -37.7946 & -37.7699 & -37.8113 & 1.68e-2 & 0.7532 \\
  & UHF & 60 & 0.0000 & -37.7839 & -37.7905 & -37.8113 & 2.75e-2 & 3.7544 \\ \addlinespace
\multirow{2}{*}{$\mathrm{NH_3}$} 
  & RHF & 56 & 0.9824 & -55.4842 & -55.4541 & -55.5192 & 3.50e-2 & 0 \\
  & UHF & 56 & 0.9824 & -55.4842 & -55.4541 & -55.5192 & 3.51e-2 & 6.40e-11 \\ \addlinespace
\multirow{2}{*}{$\mathrm{BH_3}$} 
  & RHF & 70 & 0.9843 & -26.0935 & -26.0690 & -26.1224 & 2.88e-2 & 0 \\
  & UHF & 70 & 0.9843 & -26.0935 & -26.0690 & -26.1224 & 2.88e-2 & 1.33e-15 \\ \addlinespace
\multirow{2}{*}{$\mathrm{NO}$} 
  & RHF & 360 & 0.6950 & -127.5727 & -127.5253 & -127.6593 & 8.66e-2 & 0.8002 \\
  & UHF & 360 & 0.0040 & -127.5772 & -127.5270 & -127.6593 & 8.21e-2 & 0.7605 \\ \addlinespace
\multirow{2}{*}{$\mathrm{N_2}$} 
  & RHF & 120 & 0.9206 & -107.5650 & -107.4959 & -107.6528 & 8.77e-2 & 0 \\
  & UHF & 120 & 0.9581 & -107.5845 & -107.4959 & -107.6528 & 6.83e-2 & 8.66e-12 \\ \addlinespace
\multirow{2}{*}{$\mathrm{NF}$} 
  & RHF & 360 & 0.7372 & -151.7574 & -151.7328 & -151.7929 & 3.54e-2 & 2.0114 \\
  & UHF & 360 & 0.6020 & -151.7569 & -151.7328 & -151.7929 & 3.59e-2 & 2.0114 \\ \addlinespace
\multirow{2}{*}{$\mathrm{CN}$} 
  & RHF & 840 & 0.6951 & -90.9995 & -90.9804 & -91.1732 & 1.74e-1 & 0.9369 \\
  & UHF & 840 & 0.3890 & -91.0011 & -90.9819 & -91.1732 & 1.72e-1 & 0.8520 \\ \addlinespace
\multirow{2}{*}{$\mathrm{CO}$} 
  & RHF & 120 & 0.9530 & -111.2883 & -111.2246 & -111.3634 & 7.51e-2 & 0 \\
  & UHF & 120 & 0.9292 & -111.2608 & -111.2246 & -111.3634 & 1.03e-1 & 3.47e-11 \\ \addlinespace
\multirow{2}{*}{$\mathrm{CH_4}$} 
  & RHF & 126 & 0.9687 & -39.7541 & -39.7268 & -39.8057 & 5.16e-2 & 0 \\
  & UHF & 126 & 0.9544 & -39.7412 & -39.7268 & -39.8057 & 6.44e-2 & 1.78e-15 \\ \addlinespace
\multirow{2}{*}{$\mathrm{H_2O_2}$} 
  & RHF & 220 & 0.9543 & -148.7895 & -148.7490 & -148.8605 & 7.10e-2 & 0 \\
  & UHF & 220 & 0.9543 & -148.7895 & -148.7490 & -148.8605 & 7.10e-2 & 1.64e-9 \\ \addlinespace
\multirow{2}{*}{$\mathrm{CNH}$} 
  & RHF & 330 & 0.9508 & -91.7031 & -91.6437 & -91.7893 & 8.63e-2 & 0 \\
  & UHF & 330 & 0.9138 & -91.6572 & -91.6437 & -91.7893 & 1.32e-1 & 4.14e-11 \\ \addlinespace
\multirow{2}{*}{$\mathrm{H_2CO}$} 
  & RHF & 495 & 0.9590 & -112.4277 & -112.3538 & -112.4981 & 7.05e-2 & 0 \\
  & UHF & 495 & 0.9590 & -112.4277 & -112.3538 & -112.4981 & 7.04e-2 & 4.06e-10 \\ 

\bottomrule
\end{tabular}
\end{table}

\clearpage

\begin{table}[]
\caption{STO-3G Ground State Results for the $\text{H}_{22}$ with \textbf{MHS}.}
The $H-H$ bond length in $\text{H}_{22}$ is selected as 1 Angstrom.
\begin{tabular}{ccccc}
\hline
$D$    & $E_{\text{MHS}}$ & $E_{\text{RHF}}$ & Eigensolver & Time Consumption \\ \hline
705432 & -11.48939 Ha & -11.45729 Ha                      & CuPy eigsh solver \cite{cupy_learningsys2017} & $<$30 mins    \\ \hline
\end{tabular}\label{tab:H22}

\end{table}

\end{document}